\newlength\imgwidth
\newlength\figwidth
\newlength\sfigwidth
\newlength\vfigskip
\definecolor{dgreen}{rgb}{0,.6,0}
\newcommand\mymatrix[1]{\bm{\mathrm{#1}}}
\newtheorem{theorem}{Theorem}
\newtheorem{proposition}{Proposition}
\newtheorem{corollary}{Corollary}
\newtheorem{fact}{Fact}
\begin{document}

\begin{frontmatter}


\title{On the security defects of an image encryption scheme}
\author[hkcityu,hkpolyu]{Chengqing Li\corref{corr}}
\ead{DrChengqingLi@gmail.com}
\author[germany]{Shujun Li\corref{corr}}
\ead[url]{www.hooklee.com}
\author[malaysia]{Muhammad Asim}
\author[spain]{Juana Nunez}
\author[spain]{Gonzalo Alvarez}
\author[hkcityu]{Guanrong Chen}

\address[hkcityu]{Department of Electronic Engineering, City University of Hong Kong,
83 Tat Chee Avenue, Kowloon Tong, Hong Kong SAR, China}

\address[hkpolyu]{Department of Electronic and Information Engineering,
Hong Kong Polytechnic University, Hung Hom, Kowloon, Hong Kong SAR,
China}

\address[germany]{Fachbereich Informatik und Informationswissenschaft, \\Universit\"at
Konstanz, Universit\"atsstra{\ss}e 10, 78457 Konstanz, Germany}

\address[malaysia]{Universiti Teknologi PETRONAS, 31750, Tronoh, Perak, Malaysia}

\address[spain]{Instituto de F\'{\i}sica Aplicada, Consejo Superior de
Investigaciones Cient\'{\i}ficas, Serrano 144, 28006 Madrid, Spain}

\begin{abstract}
This paper studies the security of a recently-proposed chaos-based
image encryption scheme, and points out the following problems: 1)
there exist a number of invalid keys and weak keys, and some keys
are partially equivalent for encryption/decryption; 2) given one
chosen plain-image, a subkey $K_{10}$ can be guessed with a smaller
computational complexity than that of the simple brute-force attack;
3) given at most 128 chosen plain-images, a chosen-plaintext attack
can possibly break the following part of the secret key: $\{K_i\bmod
128\}_{i=4}^{10}$, which works very well when $K_{10}$ is not too
large; 4) when $K_{10}$ is relatively small, a known-plaintext
attack can be carried out with only one known plain-image to recover
some visual information of any other plain-images encrypted by the
same key.
\end{abstract}

\begin{keyword}
cryptanalysis \sep image encryption \sep chaos \sep known-plaintext
attack \sep chosen-plaintext attack
\end{keyword}

\end{frontmatter}

\section{Introduction}

Spurred by the rapid development of multimedia and network
technologies, multimedia data are being transmitted over networks
more and more frequently. As a result, content protection of
multimedia data is urgently needed in many applications, including
both public and private services such as military information
systems and multimedia messaging systems (MMS). Although any
traditional data ciphers (such as DES and AES) can be used to meet
this increasing demand of information security, they cannot provide
satisfactory solutions to some special properties and requirements
in many multimedia-related applications. For example, one
requirement is perceptual encryption \cite{Li:PVEA:IEEETCASVT2007},
meaning that the encrypted multimedia data can still be decoded by
any standard-compliant codec and displayed, with a relatively low
quality, which cannot be realized by simply employing a traditional
cipher. As a response to this concern, a large number of
specially-designed multimedia encryption schemes have been proposed
\cite{Greek:SCANImageEncryption:JEI95,
Chuang:BaseSwitchingImageEncryption:JEI97,
Guo-Yen-Pai:HDSP:IEEPVISP2002, Chen&Yen:RCES:JSA2003,
Chung&Chang:SCANImageEncryption:PRL98,
YaobinMao:3DBakerImageEncryption:IJBC2004,YaobinMao:CSF2004}.
Meanwhile, security analysis on the proposed schemes have also been
developed, and some of these schemes have been found insecure to a
certain extent \cite{Li:AttackingPOMC2004, Li:AttackingRCES2004,
Li:AttackingCNN2004, Li:BreakingBS:2006,
Li:AnalyzingHDSP:IEEPVISP2006,Li:AttackingISWBE2006,Li:AttackingBSSE2006}.
For more discussions about multimedia data encryption techniques,
readers are referred to some recent surveys
\cite{Li:ChaosImageVideoEncryption:Handbook2004,
Furht:ImageVideoEncryption:Handbook2004,
Uhl:ImageVideoEncryption:Book2005,
Furht:MultimediaSecurity:Book2005,
Zeng:MultimediaSecurity:Book2006}.

Since 2003, Pareek et al. \cite{Pareek:PLA2003, Pareek:CNSNS2005,
Pareek:ImageEncrypt:IVC2006} have proposed three different
encryption schemes based on one or more one-dimensional chaotic
maps, among which the one proposed in
\cite{Pareek:ImageEncrypt:IVC2006} was designed for image
encryption. Recent cryptanalysis results \cite{Alvarez:PLA2003,
Li:Pareek:CSF2008} have shown that the two schemes proposed in
\cite{Pareek:PLA2003, Pareek:CNSNS2005} are not secure. The present
paper focuses on the security analysis of the image encryption
scheme proposed in \cite{Pareek:ImageEncrypt:IVC2006}, and reports
the following findings:

\begin{enumerate}
\item
There are several types of security problems with the secret key,
and each subkey is involved in at least one problem.

\item
One subkey $K_{10}$ can be separately searched with a relatively
small computational complexity, even when only one chosen
plain-image is given.

\item
The scheme is insecure against chosen-plaintext attack in the sense
that using 128 chosen plain-images may be enough to break part of
the key. The attack is especially feasible when $K_{10}$ is not too
large.

\item
When $K_{10}$ is relatively small and one plain-image is known, a
known-plaintext attack can be used to reveal some visual information
of any other plain-images encrypted with the same secret key.
\end{enumerate}

The rest of the paper is organized as follows. The next section
gives a brief introduction to the image encryption scheme under
study. Section \ref{sec:Cryptanalysis} is the main body of the
paper, focusing on a comprehensive cryptanalysis, with both
theoretical and experimental results. In the last section, some
concluding remarks and conclusions are given.

\section{The image encryption scheme under study}
\label{sec:scheme}

In this scheme, the plaintext is a color image with separate RGB
channels. The plain-image is scanned in the raster order, and then
divided into 16-pixel blocks. The encryption and decryption
procedures are performed blockwise on the plain-image. Without loss
of generality, assume that the size of the plain-image is $M\times
N$, and that $MN$ can be exactly divided by 16. Then, the
plain-image $\bm{I}$ can be represented as a 1-D signal
$\{I(i)\}_{i=0}^{MN-1}$ with $N_b=MN/16$ blocks, namely,
$\bm{I}=\{I^{(16)}(k)\}_{k=0}^{N_b-1}$, where
$I^{(16)}(k)=\{I(16k+i)\}_{i=0}^{15}$. Similarly, the cipher-image
is denoted by $\bm{I}^*=\{I^{*(16)}(k)\}_{k=0}^{N_b-1}$, where
$I^{*(16)}(k)=\{I^*(16k+i)\}_{i=0}^{15}$.

The secret key of the encryption scheme under study is an 80-bit
integer and can be represented as $K=K_1\cdots K_{10}$, where each
subkey $K_i\in\{0,\ldots,255\}$. Two chaotic systems are involved in
the encryption scheme, and both are realized by iterating the
Logistic map
\begin{equation}
f(x)=\mu x(1-x),
\label{equation:logistic}
\end{equation}
where $\mu$ is the control parameter and fixed to be 3.9999. One
chaotic map runs globally throughout the whole encryption process,
while another one runs locally for the encryption of each 16-pixel
block. The initial condition of the global chaotic map is determined
by the six subkeys $K_4\sim K_9$ as follows:
\begin{equation}
\label{equation:X0}
X_{0}=\left(\frac{\sum_{i=4}^{6}K_i\cdot
2^{8(i-4)}}{2^{24}}+\frac{\sum_{j=7}^9((K_j\bmod 16)+\lfloor
K_j/16\rfloor)}{96}\right)\bmod 1,
\end{equation}
and the local chaotic map corresponding to each block is initialized
according to selected chaotic states of the global map. For the
$k$-th block $I^{(16)}(k)$, the encryption process can be described
by the following steps.

\begin{itemize}
\item
\textit{Step 1: Determining the initial condition of the local
chaotic map.} Iterate the global chaotic map until 24 chaotic states
within the interval $[0.1,0.9)$ are obtained. Denoting these chaotic
states by $\{\hat{X}_j\}_{j=1}^{24}$, generate 24 integers
$\{P_j\}_{j=1}^{24}$, where $P_j=\lfloor
24(\hat{X}_j-0.1)/0.8\rfloor+1$.\footnote{In Sec.~2 of
\cite{Pareek:ImageEncrypt:IVC2006}, the interval is $[0.1,0.9]$ and
$P_j=\lfloor 23(\hat{X}_j-0.1)/0.8\rfloor+1$. However, following
this process, $P_j=24$ when and only when $\hat{X}_j=0.9$, which
becomes a rare event and conflicts with the requirement that $P_j$
has a roughly uniform distribution over $\{1,\ldots,24\}$.
Therefore, in this paper we changed the original process in
\cite{Pareek:ImageEncrypt:IVC2006} to a more reasonable one. Note
that such a change does not affect the performance of the encryption
scheme.} Then, calculate $B_2=\sum_{i=1}^{3}K_i\cdot 2^{8(i-1)}$ and
set the initial condition of the local chaotic map as
\begin{equation}\label{eq:generateY0}
Y_0=\left(\frac{B_2+\sum_{j=1}^{24}B_2[P_j]\cdot
2^{j-1}}{2^{24}}\right)\bmod 1,
\end{equation}
where $B_2[P_j]$ denotes the $P_j$-th bit of $B_2$.

\item
\textit{Step 2: Encrypting the $k$-th block $I^{(16)}(k)$.} For each
pixel in the block, iterate the local chaotic map to obtain $K_{10}$
consecutive chaotic states $\{\hat{Y}_j\}_{j=1}^{K_{10}}$ which fall
into the interval [0.1,0.9), and then encrypt the RGB values of the
current pixel according to the following formulas:
\begin{eqnarray}
R^* =
E_1(R)=g_{K_4,K_5,K_7,K_8,\hat{Y}_{K_{10}}}\circ\cdots\circ
g_{K_4,K_5,K_7,K_8,\hat{Y}_1}(R),\label{equation:encryptR}\\
G^* =
E_2(G)=g_{K_5,K_6,K_8,K_9,\hat{Y}_{K_{10}}}\circ\cdots\circ
g_{K_5,K_6,K_8,K_9,\hat{Y}_1}(G),\label{equation:encryptG}\\
B^* =
E_3(B)=g_{K_6,K_4,K_9,K_7,\hat{Y}_{K_{10}}}\circ\cdots\circ
g_{K_6,K_4,K_9,K_7,\hat{Y}_1}(B),\label{equation:encryptB}
\end{eqnarray}
\normalsize where $\circ$ denotes the composition of two functions
and $g_{a_0,b_0,a_1,b_1,Y}(x)$ is a function under the control of
$Y$ as shown in Table~\ref{table:encryption}.

\item
\textit{Step 3: Updating subkeys $K_1,\ldots,K_9$.} Perform the
following updating operation for $i=1\sim 9$:
\begin{equation}\label{eq:update}
K_i=\left(K_i+K_{10}\right)\bmod 256.
\end{equation}
\end{itemize}

\begin{table*}[htbp]
\center \caption{The definition of $g_{a_0,b_0,a_1,b_1,Y}(x)$, where
$\overline{x}$ denotes the bitwise complement of $x$, and $\oplus$
denotes the bitwise XOR operation.}\label{table:encryption}
\scriptsize
\begin{tabular}{c|c|c}
\hline $Y\in$ & $g_{a_0,b_0,a_1,b_1,Y}(x)$= & $g_{a_0,b_0,a_1,b_1,Y}^{-1}(x)$=\\
\hline $[0.10,0.13)\cup[0.34,0.37)\cup[0.58,0.62)$ &
\multicolumn{2}{c}{$\overline{x}=x\oplus 255$}\\
\hline $[0.13,0.16)\cup [0.37, 0.40) \cup [0.62, 0.66)$ &
\multicolumn{2}{c}{$x\oplus a_0$}\\
\hline $[0.16,0.19)\cup[0.40, 0.43) \cup [0.66, 0.70)$ &
$(x+a_0+b_0)\bmod 256$ & $(x-a_0-b_0)\bmod 256$\\
\hline $[0.19,0.22)\cup [0.43, 0.46) \cup [0.70, 0.74)$ &
\multicolumn{2}{c}{$\overline{x\oplus a_0}=x\oplus(a_0\oplus 255)=x\oplus\overline{a_0}$}\\
\hline $[0.22,0.25)\cup [0.46, 0.49) \cup [0.74, 0.78)$ &
\multicolumn{2}{c}{$x\oplus a_1$}\\
\hline $[0.25,0.28)\cup [0.49, 0.52) \cup [0.78, 0.82)$ &
$(x+a_1+b_1)\bmod 256$ & $(x-a_1-b_1)\bmod 256$\\
\hline $[0.28,0.31)\cup [0.52, 0.55) \cup [0.82, 0.86)$ &
\multicolumn{2}{c}{$\overline{x\oplus a_1}=x\oplus(a_1\oplus 255)=x\oplus\overline{a_1}$}\\
\hline $[0.31,0.34)\cup[0.55,0.58)\cup[0.86,0.90]$ &
\multicolumn{2}{c}{$x=x\oplus 0$}\\\hline
\end{tabular}
\end{table*}

The decryption procedure is similar to the above encryption
procedure, except that Eqs.
(\ref{equation:encryptR})$\sim$(\ref{equation:encryptB}) in Step 2
are replaced by the following ones:
\begin{eqnarray}
R=
E_1^{-1}(R^*)=g_{K_4,K_5,K_7,K_8,\hat{Y}_1}^{-1}\circ\cdots\circ
g_{K_4,K_5,K_7,K_8,\hat{Y}_{K_{10}}}^{-1}(R^*),\\
G=
E_2^{-1}(G^*)=g_{K_5,K_6,K_8,K_9,\hat{Y}_1}^{-1}\circ\cdots\circ
g_{K_5,K_6,K_8,K_9,\hat{Y}_{K_{10}}}^{-1}(G^*),\\
B=
E_3^{-1}(B^*)=g_{K_6,K_4,K_9,K_7,\hat{Y}_1}^{-1}\circ\cdots\circ
g_{K_6,K_4,K_9,K_7,\hat{Y}_{K_{10}}}^{-1}(B^*),
\end{eqnarray}
where $g_{a_0,b_0,a_1,b_1,Y}^{-1}(x)$ is the inverse function of
$g_{a_0,b_0,a_1,b_1,Y}(x)$ with respect to $x$ as shown in
Table~\ref{table:encryption}.

\section{Cryptanalysis}
\label{sec:Cryptanalysis}

In this section, we report our cryptanalysis results about the image
encryption scheme under study. These include a comprehensive
analysis on invalid keys, weak keys and partially equivalent keys, a
chosen-plaintext attack to break $K_{10}$, a chosen-plaintext attack
to break $\{K_i\bmod 128\}_{i=4}^{10}$, a known-plaintext attack,
and some other minor security problems.

\subsection{Two properties of the scheme}
\label{ssec:basicproperties}

To facilitate the description of the discussion below, we first
point out two properties of the scheme under study in this
subsection. One is about the subkey updating mechanism, and the
other is about the essential equivalent presentation form of the
encryption function.

To improve the security of the scheme, an updating mechanism is
introduced for subkeys in Eq.~(\ref{eq:update}) of
\cite{Pareek:ImageEncrypt:IVC2006}. Because the updating process is
performed in a finite-state field, the sequence of each updated
subkey produced by such a mechanism is always periodic (see
Fact~\ref{fact:T_K10} below). As a result, the sequence of the
dynamic keys is also periodic. Assuming that the period is $T$, the
$N_b$ plain pixel-blocks $\{I^{(16)}(k)\}_{k=0}^{N_b-1}$ can be
divided into $T$ separate sets according to the values of these
dynamically updated subkeys:
$\left\{\mathbb{I}_j=\bigcup\limits_{k=0}^{N_T-1} I^{(16)}(T\cdot
k+j)\right\}_{j=0}^{T-1}$, where $N_T=\lceil N_b/T\rceil$. For
blocks in the same set $\mathbb{I}_j$, all the updated subkeys are
identical. In other words, for each set $\mathbb{I}_j$ ($1/T$ of the
whole plain-image) one can consider that the secret key is fixed.
Since $1/T$ of a plain-image may be enough to reveal essential
visual information, one can turn to break any set $\mathbb{I}_j$
without considering the updating mechanism.

\begin{fact}\label{fact:T_K10}
For $x,a\in\{0,\ldots,255\}$, the integer sequence
$\{y(i)=(x+ai)\bmod256\}_{i=0}^{\infty}$, has period
$T=256/\gcd(a,256)$.
\end{fact}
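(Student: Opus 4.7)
The plan is to reduce the claim to the standard fact about the additive order of an element in $\mathbb{Z}/256\mathbb{Z}$. First I would observe that $y(i+T)-y(i) \equiv aT \pmod{256}$, so the period of $\{y(i)\}$ equals the smallest positive integer $T$ with $aT \equiv 0 \pmod{256}$; the constant offset $x$ plays no role, and $y(i+T)=y(i)$ for all $i\geq 0$ follows from the case $i=0$ because the recurrence $y(i+1)=(y(i)+a)\bmod 256$ is translation-invariant.

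Next I would set $d=\gcd(a,256)$ and write $a=d\,a'$, $256=d\,m$ with $\gcd(a',m)=1$. The congruence $aT\equiv 0 \pmod{256}$ then becomes $da'T \equiv 0 \pmod{dm}$, i.e.\ $m \mid a'T$, and since $\gcd(a',m)=1$ this is equivalent to $m \mid T$. The smallest positive such $T$ is therefore $T=m=256/\gcd(a,256)$, establishing both that this value is \emph{a} period and that no smaller positive value is one.

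The only points worth verifying carefully are the boundary cases: for $a=0$ the convention $\gcd(0,256)=256$ gives $T=1$, matching the constant sequence, while $\gcd(a,256)=1$ gives the maximal period $T=256$. There is no real obstacle here — this is an elementary property of cyclic groups — so I expect the proof in the paper to consist of essentially the two observations above.
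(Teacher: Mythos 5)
Your proof is correct, and it is exactly the standard argument: the paper states Fact~1 without any proof (treating it as the elementary fact about the additive order of $a$ in $\mathbb{Z}/256\mathbb{Z}$), and your reduction to the smallest positive $T$ with $aT\equiv 0\pmod{256}$, together with the $\gcd$ computation and the check of the $a=0$ convention, supplies precisely the reasoning the authors intended.
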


With respect to the encryption function, one can see from
Table~\ref{table:encryption} that each encryption subfunction is
represented in one of the following two formats:
\begin{enumerate}
\item
$g_{a_0,b_0,a_1,b_1,Y}(x)=x\oplus\alpha$, where
$\alpha\in\{0,255,a_0,a_1,\overline{a_0},\overline{a_1}\}$;

\item
$g_{a_0,b_0,a_1,b_1,Y}(x)=x\dotplus\beta$, where $x\dotplus \gamma$
denotes $(x+\gamma)\bmod 256$ (the same hereinafter), and
$\beta\in\{a_0\dotplus b_0,a_1\dotplus
b_1\}\subset\{0,\cdots,255\}$.
\end{enumerate}

Because
$(x\oplus\alpha_1)\oplus\alpha_2=x\oplus(\alpha_1\oplus\alpha_2)$
and
$(x\dotplus\beta_1)\dotplus\beta_2=x\dotplus(\beta_1\dotplus\beta_2)$,
consecutive encryption subfunctions of the same kind can be combined
together, and those with $\alpha=0$ or $\beta=0$ can be simply
ignored. As a result, each encryption function $E_i(x)$ is a
composition of $len\leq K_{10}$ subfunctions:
$\{G_j(x)\}_{j=1}^{len}$, where $G_j(x)=x\oplus\alpha_{\lceil
j/2\rceil}$ or $x\dotplus\beta_{\lceil j/2\rceil}$, and $G_j(x)$,
$G_{j+1}(x)$ are encryption subfunctions of different kinds.
According to the types of $G_1(x)$ and $G_{len}(x)$, $E_i(x)$ has
four different formats:

\begin{enumerate}
\item
$E_i(x)=((\cdots((x\dotplus\beta_1)\oplus\alpha_1)\cdots)\oplus\alpha_{\lceil
(len-1)/2\rceil})\dotplus\beta_{\lceil len/2\rceil}$;

\item
$E_i(x)=((\cdots((x\dotplus\beta_1)\oplus\alpha_1)\cdots)\dotplus\beta_{\lceil
(len-1)/2\rceil})\oplus\alpha_{\lceil len/2\rceil}$;

\item
$E_i(x)=((\cdots((x\oplus\alpha_1)\dotplus\beta_1)\cdots)\oplus\alpha_{\lceil
(len-1)/2\rceil})\dotplus\beta_{\lceil len/2\rceil}$;

\item
$E_i(x)=((\cdots((x\oplus\alpha_1)\dotplus\beta_1)\cdots)\dotplus\beta_{\lceil
(len-1)/2\rceil})\oplus\alpha_{\lceil len/2\rceil}$.
\end{enumerate}

Note that $len$ is generally less than $K_{10}$. Assuming that
$\{Y_i\}$ distributes uniformly over the interval [0.1,0.9], we can
get the following inequality:
\begin{equation}
Prob[len=K_{10}]\leq
\begin{cases}
2\cdot (\frac{5}{8}\cdot\frac{1}{4})^{\frac{K_{10}}{2}}, &
\mbox{when $K_{10}$ is even},\\
(\frac{5}{8}\cdot\frac{1}{4})^{\left\lfloor\frac{K_{10}}{2}\right\rfloor}(\frac{5}{8}+\frac{1}{4}),
& \mbox{when  $K_{10}$ is odd.}
\end{cases}
\end{equation}
From the above equation, we can see that the probability decreases
exponentially as $K_{10}$ increases. Because it is difficult to
exactly estimate the probability that $len$ is equal to a given
value less than $K_{10}$, we performed a number of random
experiments for a $512\times 512$ plain-image to investigate the
possibilities. Figure \ref{figure:distributelengthfunction} shows a
result of 100 random keys when $K_{10}=66$.

\begin{figure}[htbp]
\centering
\begin{minipage}[t]{1.5\figwidth}
\centering
\includegraphics[width=1.5\figwidth]{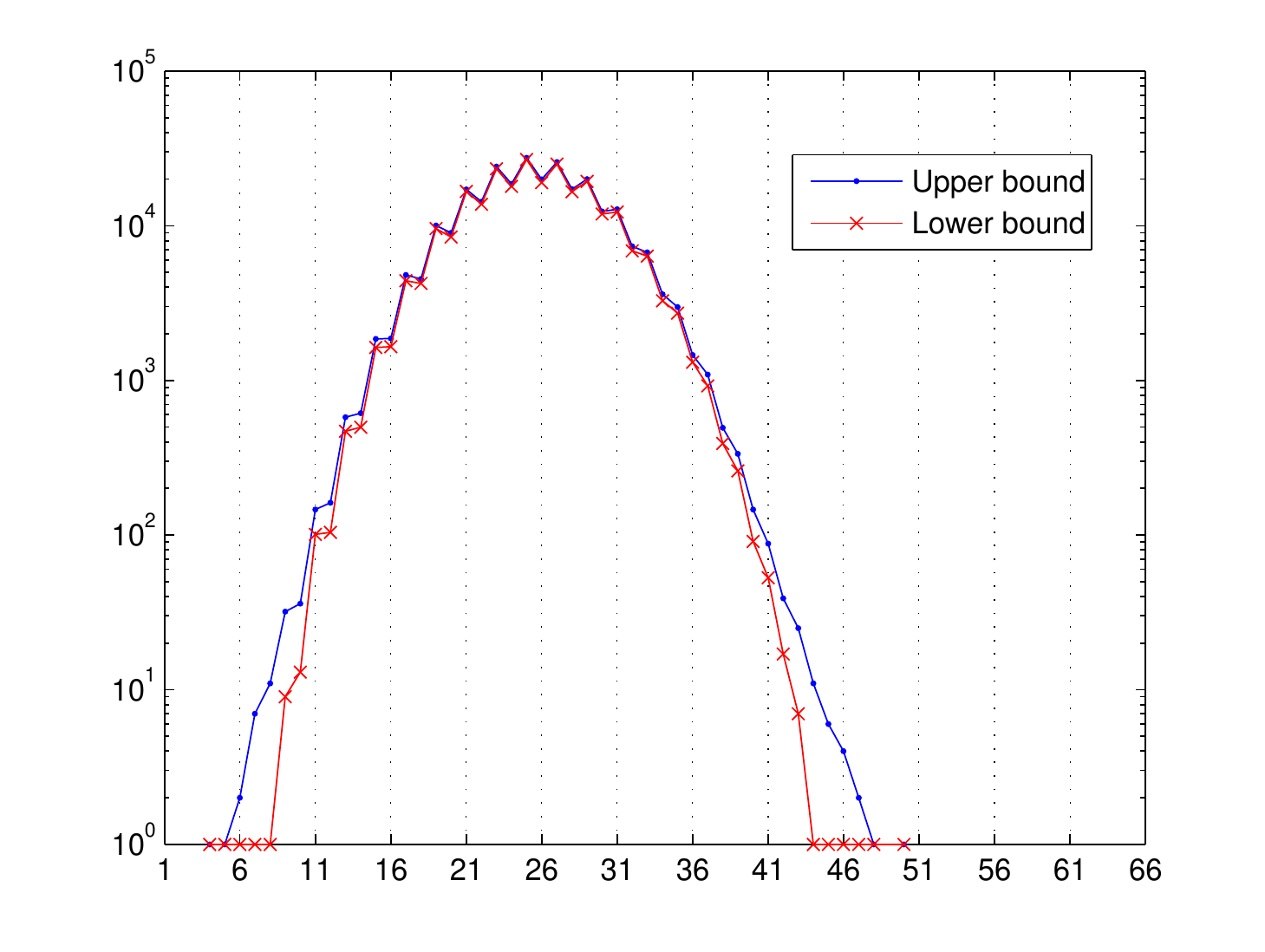}
\end{minipage}
\caption{The number of subfunctions composed of $len$
subfunctions, when $K_{10}=66$ and other subkeys were generated
randomly for 100 times.} \label{figure:distributelengthfunction}
\end{figure}

Since $G_j(x)$ is a composition of multiple functions
$g_{a_0,b_0,a_1,b_1,Y}(x)$ of the same kind, and since that
$\overline{a_0}\oplus a_1=a_0\oplus\overline{a_1}=a_0\oplus
a_1\oplus 255$ and $\overline{a_0}\oplus\overline{a_1}=a_0\oplus
a_1$, one can easily deduce that
\begin{multline}\label{eq:alpha_i_set}
\alpha_i\in\mathbb{A}=\{255, a_0, a_1, a_0\oplus 255, a_1\oplus
255, a_0\oplus a_1, \\a_0\oplus a_1\oplus 255\}
\end{multline}
and
\begin{align*}
\beta_i\in\mathbb{B} & =\{z_1(a_0\dotplus b_0)\dotplus
z_2(a_1\dotplus b_1)\;|\; z_1,z_2\in\{0,\cdots,K_{10}\} \\
 {} &        \mbox{ and } z_1+z_2\leq K_{10}\}.
\end{align*}

Note that $\mathbb{A}$ has an interesting property: $\forall
x_1,x_2\in\mathbb{A}\cup\{0\}$, $x_1\oplus
x_2\in\mathbb{A}\cup\{0\}$. This property concludes that
$\bigoplus_i\alpha_i\in\mathbb{A}\cup\{0\}$, which will be used
later in Sec.~\ref{ssec:CPA} for chosen-plaintext attack.

\subsection{Analysis of the key space}

In this subsection, we report some \textit{invalid keys},
\textit{weak keys} and \textit{partially equivalent keys} existing
in the encryption scheme under study. Here, an \textit{invalid key}
means a key that cannot ensure the successful working of the
encryption scheme, a \textit{weak key} is a key that corresponds to
one or more security defects, and \textit{partially equivalent keys}
generate the same encryption result for a certain part of the
plain-image. When estimating the key space, invalid keys and weak
keys should be excluded, and all keys that are partially equivalent
to each other should be counted as one single key \cite[Sec.
3.2]{AlvarezLi:Rules:IJBC2006}.

\subsubsection{Invalid keys with respect to $K_4\sim K_{9}$}

When $X_0=0$, the global chaotic map will fall into the fixed point
0, which disables the encryption process due to the lack of chaotic
states lying in $[0.1,0.9]$. Now, let us see when $X_0=0$ can
happen.

Observing Eq.~(\ref{equation:X0}), one can see that $X_0=0$ is
equivalent to
\begin{multline}
\frac{\sum_{i=4}^{6}K_i\cdot
2^{8(i-4)}}{2^{24}}\equiv\\
-\mathrm{FP}\left(\frac{\sum_{j=7}^9((K_j\bmod
16)+\lfloor K_j/16\rfloor)}{96}\right)\pmod 1
\end{multline}
where $\mathrm{FP}(x)$ denotes the floating-point value of $x$.
Because $0\leq\sum_{i=4}^{6}K_i\cdot 2^{8(i-4)}<2^{24}$ and
$0\leq\sum_{j=7}^9((K_j\bmod 16)+\lfloor K_j/16\rfloor)\leq 15\cdot
6=90<96$, one can further simplify the above equation as follows:
\begin{equation}\label{eq:key-X0equal0}
\frac{\sum_{i=4}^{6}K_i\cdot
2^{8(i-4)}}{2^{24}}=1-\frac{\mathrm{FP}\left(\sum_{j=7}^9((K_j\bmod
16)+\lfloor K_j/16\rfloor)\right)}{96}.
\end{equation}
By the fact that $\frac{\sum_{i=4}^{6}K_i\cdot
2^{8(i-4)}}{2^{24}}\bmod 2^{-24}=0$, the following equality also
holds:
\[
\frac{\mathrm{FP}\left(\sum_{j=7}^9((K_j\bmod 16)+\lfloor
K_j/16\rfloor)\right)}{96}\bmod 2^{-24}=0.
\]
By checking all the 91 possible values of $\sum_{j=7}^9((K_j\bmod
16)+\lfloor K_j/16\rfloor)$, one can easily get the following
result:
\begin{equation}\label{eq:key-X0equal0-P1}
\sum_{j=7}^9((K_j\bmod 16)+\lfloor K_j/16\rfloor)=3C,
\end{equation}
where $C\in[0, 30]$. In this case,
\[
1-\mathrm{FP}\left(\frac{\sum_{j=7}^9((K_j\bmod 16)+\lfloor
K_j/16\rfloor)}{96}\right)=1-\frac{C}{32}.
\]
Substituting the above equation into Eq.~(\ref{eq:key-X0equal0}),
one has
\begin{equation}\label{eq:key-X0equal0-P2}
\sum_{i=4}^{6}K_i\cdot 2^{8(i-4)}=2^{19}(32-C).
\end{equation}
As a result, any key that satisfies Eqs.~(\ref{eq:key-X0equal0-P1})
and (\ref{eq:key-X0equal0-P2}) simultaneously can lead to $X_0=0$.
The number of such invalid subkeys $(K_4,\cdots,K_9)$ can be
calculated to be $5592406=2^{22.415}$, where $5592406=\lceil
16^6/3\rceil$ is the number of distinct values of $(K_7,K_8,K_9)$
satisfying Eq.~(\ref{eq:key-X0equal0-P1}), calculated according to
the following Proposition~\ref{proposition:sum_mod3}.

\begin{proposition}\label{proposition:sum_mod3}
Given an $n$-dimensional vector
$\mathbf{A}=(a_1,\cdots,a_n)\in\{0,\cdots,15\}^n$, the number of
distinct values of $\mathbf{A}$ that satisfy $(a_1+\cdots+a_n)\bmod
3=0$, $1$ and $2$ are $\lceil 16^n/3\rceil$, $\lfloor 16^n/3\rfloor$
and $\lfloor 16^n/3\rfloor$, respectively.
\end{proposition}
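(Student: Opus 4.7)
The plan is to prove the proposition by induction on $n$, exploiting the fact that the single-coordinate alphabet $\{0,1,\ldots,15\}$ is not evenly distributed modulo~$3$: it contains $6$ elements congruent to $0$ (namely $0,3,6,9,12,15$) but only $5$ elements congruent to each of $1$ and $2$. Writing $N_n(r)$ for the number of vectors $\mathbf{A}\in\{0,\ldots,15\}^n$ with coordinate sum $\equiv r\pmod 3$, the base case $n=1$ gives $N_1(0)=6$ and $N_1(1)=N_1(2)=5$, which matches $\lceil 16/3\rceil=6$ and $\lfloor 16/3\rfloor=5$.

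For the inductive step, I would partition the vectors in $\{0,\ldots,15\}^{n+1}$ according to the residue class of the last coordinate $a_{n+1}$, obtaining
\begin{equation*}
N_{n+1}(r) \;=\; 6\,N_n(r) + 5\,N_n(r-1) + 5\,N_n(r-2),
\end{equation*}
where indices are taken modulo $3$. Using the total-count identity $N_n(0)+N_n(1)+N_n(2)=16^n$, this recursion collapses to $N_{n+1}(r)=N_n(r)+5\cdot 16^n$, a shift that is independent of $r$. Consequently the differences $N_n(0)-N_n(1)$ and $N_n(0)-N_n(2)$ are invariants of the induction, and from the base case both equal $1$ for every $n\geq 1$.

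To finish, I would substitute $N_n(1)=N_n(2)=N_n(0)-1$ back into $N_n(0)+N_n(1)+N_n(2)=16^n$, obtaining $3N_n(0)=16^n+2$, i.e.\ $N_n(0)=(16^n+2)/3$. Because $16\equiv 1\pmod 3$ forces $16^n\equiv 1\pmod 3$, this value is exactly $\lceil 16^n/3\rceil$, while $N_n(1)=N_n(2)=(16^n-1)/3=\lfloor 16^n/3\rfloor$, as claimed.

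The computation is elementary throughout; the only place where care is needed is verifying that the right-hand side of the recursion is genuinely $r$-independent (so that the invariance of the differences is preserved), together with the arithmetic observation $16^n\equiv 1\pmod 3$ that converts the closed form into the ceiling/floor expressions of the statement. An alternative route, which I would mention briefly as a sanity check, is a roots-of-unity filter: with $\omega=e^{2\pi i/3}$ and $f(x)=1+x+\cdots+x^{15}$ one has $f(\omega)=f(\omega^2)=1$, yielding immediately the closed form $N_n(r)=\tfrac{1}{3}\bigl(16^n+\omega^{-r}+\omega^{-2r}\bigr)$, from which the ceiling/floor values drop out.
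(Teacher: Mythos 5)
Your proof is correct and follows essentially the same route as the paper's: induction on $n$ with base case $6,5,5$ and the inductive step obtained by conditioning on the residue class of the last coordinate. Your bookkeeping is slightly cleaner---collapsing the recursion to $N_{n+1}(r)=N_n(r)+5\cdot 16^n$ handles all three residues at once, whereas the paper works out the residue-$0$ count explicitly and dismisses the other two as ``a similar process''---but the underlying argument is the same.
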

\begin{proof}
This proposition can be proved by mathematical induction.

When $n=1$, one can easily verify that the number of distinct values
of $\mathbf{A}$ that satisfy $a_1\bmod 3=0,1,2$, are $6, 5, 5$,
respectively. Since $6=\lceil 16/3\rceil$ and $5=\lfloor
16/3\rfloor$, the proposition is true.

Assuming that the position is true for $1\leq n\leq k$, we prove the
case for $n=k+1$. First, rewrite $a_1+\cdots+a_{k+1}$ as
$A_k+a_{k+1}$, where $A_k=a_1+\cdots+a_k$. Then, observe that
$(A_k+a_{k+1})\bmod 3=0$ is equivalent to $A_k\equiv -a_{k+1}\pmod
3$. Thus, the number of distinct values of $\mathbf{A}$ that
satisfying $A_k+a_{k+1}\bmod 3=0$ is the following sum:
\begin{IEEEeqnarray*}{rCl}
\IEEEeqnarraymulticol{3}{l}{N[(A_k+a_{k+1})\bmod 3=0]}\\
& = & \lceil 16^k/3\rceil\cdot\lceil
16/3\rceil+2\lfloor 16^k/3\rfloor\cdot\lfloor 16/3\rfloor\\
& = & (\lfloor 16^k/3\rfloor+1)\cdot\lceil 16/3\rceil+2\lfloor
16^k/3\rfloor\cdot\lfloor 16/3\rfloor\\
& = & 16\cdot\lfloor 16^k/3\rfloor+6.
\end{IEEEeqnarray*}

Assume $16^k=(15+1)^k=3C+1$. Then, $16^{k+1}=48C+16$ and $\lceil
16^{k+1}/3\rceil=16C+\lceil 16/3\rceil=16C+6$. Then $16\cdot\lfloor
16^k/3\rfloor+6=16C+6=\lceil 16^{k+1}/3\rceil$. Going through a
similar process, one can easily get $N[(A_k+a_{k+1})\bmod
3=1]=N[(A_k+a_{k+1})\bmod 3=2]=\lfloor 16^{k+1}/3\rfloor$. This
completes the mathematical induction, hence finishes the proof of
the proposition.
\end{proof}

\subsubsection{Invalid keys with respect to $K_1\sim K_3$}

For a given block $I^{(16)}(k)$, if $Y_0=0$, the local chaotic map
will fall into the fixed point 0, which will also disable the
encryption process of the corresponding block. According to
Eq.~(\ref{eq:generateY0}), $Y_0=0$ when the following equality
holds:
\[
\left(B_2+\sum_{j=1}^{24}B_2[P_j]\cdot 2^{j-1}\right)\bmod 2^{24}=0,
\]
Since $0\leq B_2=\sum_{i=1}^3K_i\cdot 2^{8(i-1)}<2^{24}$ and
$0\leq\sum_{j=1}^{24}B_2[P_j]\cdot 2^{j-1}<2^{24}$, the above
equality can be simplified as follows:
\begin{equation}\label{equation:Y0equal0}
\sum_{j=1}^{24}B_2[P_j]\cdot 2^{j-1}=2^{24}-B_2.
\end{equation}
Assuming that $P_j$ distributes uniformly in $\{1,\cdots,24\}$,
$B_2$ and $(2^{24}-B_2)$ have $m$ and $n$ 0-bits, respectively, the
probability for Eq.~(\ref{equation:Y0equal0}) to hold is
\[
p_s=\left(\frac{m}{24}\right)^n\cdot\left(\frac{24-m}{24}\right)^{24-n}=\frac{m^n(24-m)^{24-n}}{24^{24}}.
\]
The relationship between the values of $p_s$ and $(25m+n)$ is shown
in Fig.~\ref{figure:probabilitysameY0}, from which one can see that
the probability is not negligible for some values of $(m,n)$. In
fact, because $p_s>0$ holds for any value of $(m,n)$, we can say
that any key is invalid from the strictest point of view. To resolve
this problem, the original encryption scheme must be amended. One
simple way to do so is setting $Y_0$ to be a pre-defined value once
$Y_0=0$ occurs. In the following discussions of this paper and all
experiments involved, we set $Y_0=1/2^{24}$ when such an event
occurs.

\begin{figure}[htbp]
\centering
\begin{minipage}[t]{1.5\figwidth}
\centering
\includegraphics[width=1.5\figwidth]{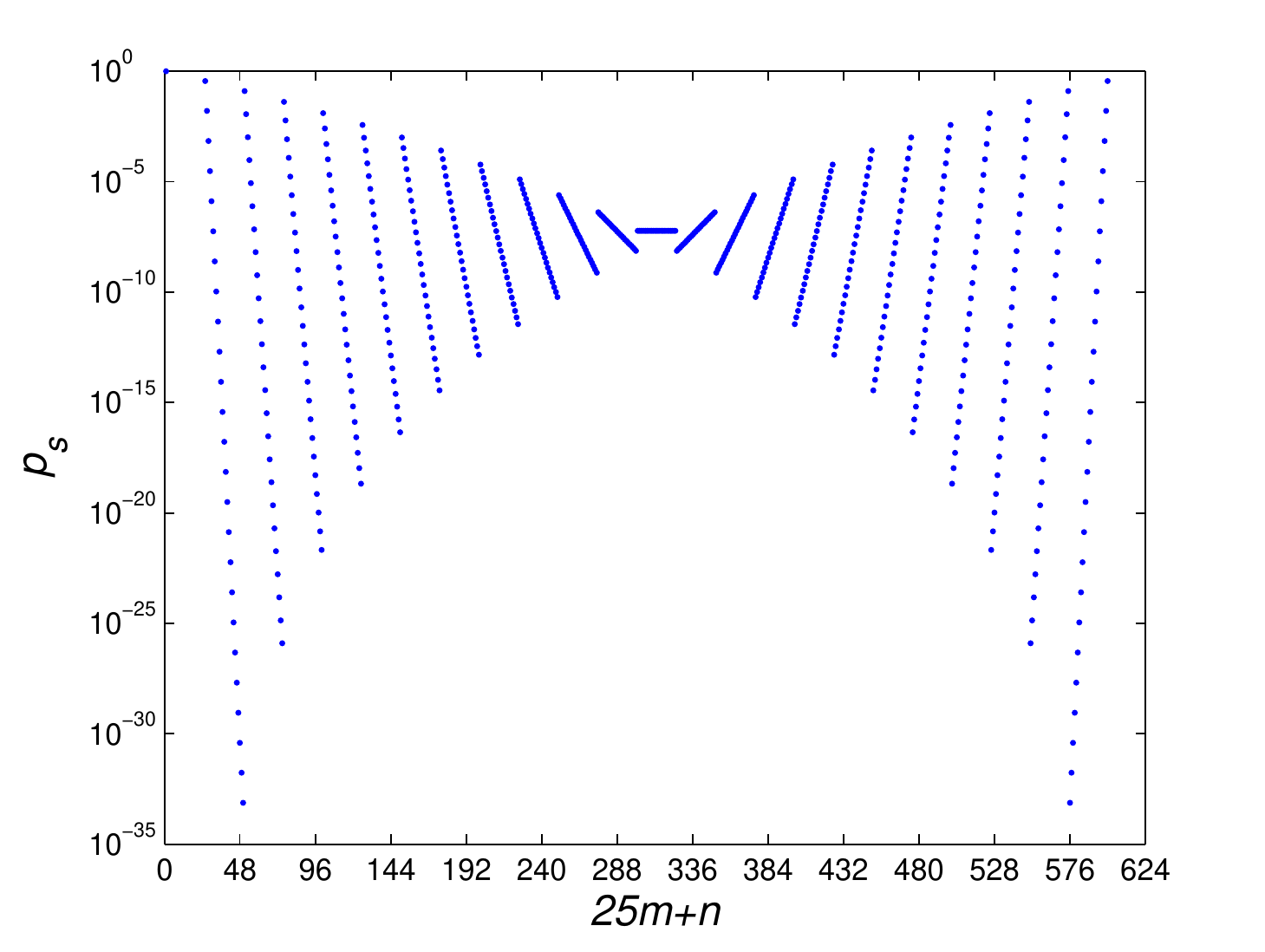}
\end{minipage}
\caption{The value of $p_s$ with respect to the value of $(25m+n)$,
where $m,n\in\{0,\cdots,24\}$.}\label{figure:probabilitysameY0}
\end{figure}

\subsubsection{Weak keys with respect to $K_{10}$}

In the encryption scheme under study, the update process of subkeys
$K_1\sim K_9$ and the number of subfunctions
$g_{a_0,b_0,a_1,b_1,Y}(x)$ in each encryption function are both
controlled by the subkey $K_{10}$. In the following, we discuss two
weak-key problems with respect to $K_{10}$, which correspond to the
above two processes controlled by $K_{10}$, respectively.

From Fact~\ref{fact:T_K10}, one can see that the update of subkeys
$K_1\sim K_9$ has an inherent weakness, i.e., the possible values
for the period of the sequence of the updated subkeys is $2^i$,
$i=1\sim 8$. For some values of $K_{10}$, this period can be very
small, which weakens the updating mechanism considerably. The worst
situation occurs when $K_{10}=128$, which corresponds to period two.
From the most conservative point of view, $T$ should take the
maximal value 256, which means that $K_{10}$ should be an odd
number.

The other problem deals with the number of subfunctions
$g_{a_0,b_0,a_1,b_1,Y}(x)$ in each encryption function. When
$K_{10}=1$, the probability for a pixel to remain unchanged is 1/8
(under the assumption that $Y_i$ distributes uniformly in the
chaotic interval). Though the probability seems quite large, our
experiments have shown that very little visual information leaks in
the cipher-image. When $K_{10}\geq 2$, experiments have shown that
it is almost impossible to distinguish any visual pattern from the
cipher-image. As a result, in this case there exists only one major
weak key: $K_{10}=1$. To avoid other potential security defects,
$K_{10}\geq 8$ is suggested.

\subsubsection{Weak keys with respect to $K_4\sim K_9$}
\label{ssection:weakkeysK4toK9}

Observing Table~\ref{table:encryption}, one can see that the
encryption subfunction $g_{a_0,a_1,b_0,b_1,y}$ $(x)=x$ or $\bar{x}$
when the following requirements are satisfied:
\begin{equation}\label{equation:weakkey_terms}
a_0,a_1\in\{0,255\}\mbox{ and }a_0+b_0\equiv a_1+b_1\equiv
0\pmod{256}.
\end{equation}
For the sub-image $\mathbb{I}_j$, if the subkeys corresponding to
one encryption function $E_i(x)$ satisfy the above requirements,
$E_i(x)$ will also be $x$ or $\bar{x}$. Assuming that the chaotic
trajectory of the local chaotic map has a uniform distribution in
the interval $[0.1, 0.9]$, the probability of
$g_{a_0,a_1,b_0,b_1,y}(x)=\bar{x}$ is $p=3/8$. Then, according to
Proposition~\ref{proposition:OddBinom} given below (note that
$\bar{x}=x\oplus 255$), $\forall i=1\sim 3$, the probabilities of
$E_i(x)=\bar{x}$ and $E_i(x)=x$ are $(1-(1/4)^{K_{10}})/2$ and
$(1+(1/4)^{K_{10}})/2$, respectively. This means that about half of
all plain-pixels in $\mathbb{I}_j$ are not encrypted at all, which
may reveal some visual information about the plain-image. As an
example, when $K=``3C1DE8FF0151FF012840"$ (which corresponds to
$T=4$), one of our experiments showed that 49.9\% of all the pixels
in $\mathbb{I}_0$ were not encrypted (see
Fig.~\ref{figure:weakkeyK4toK9} for the encryption result).

\begin{figure}[htbp]
\centering
\begin{minipage}[t]{\imgwidth}
\centering
\includegraphics[width=\textwidth]{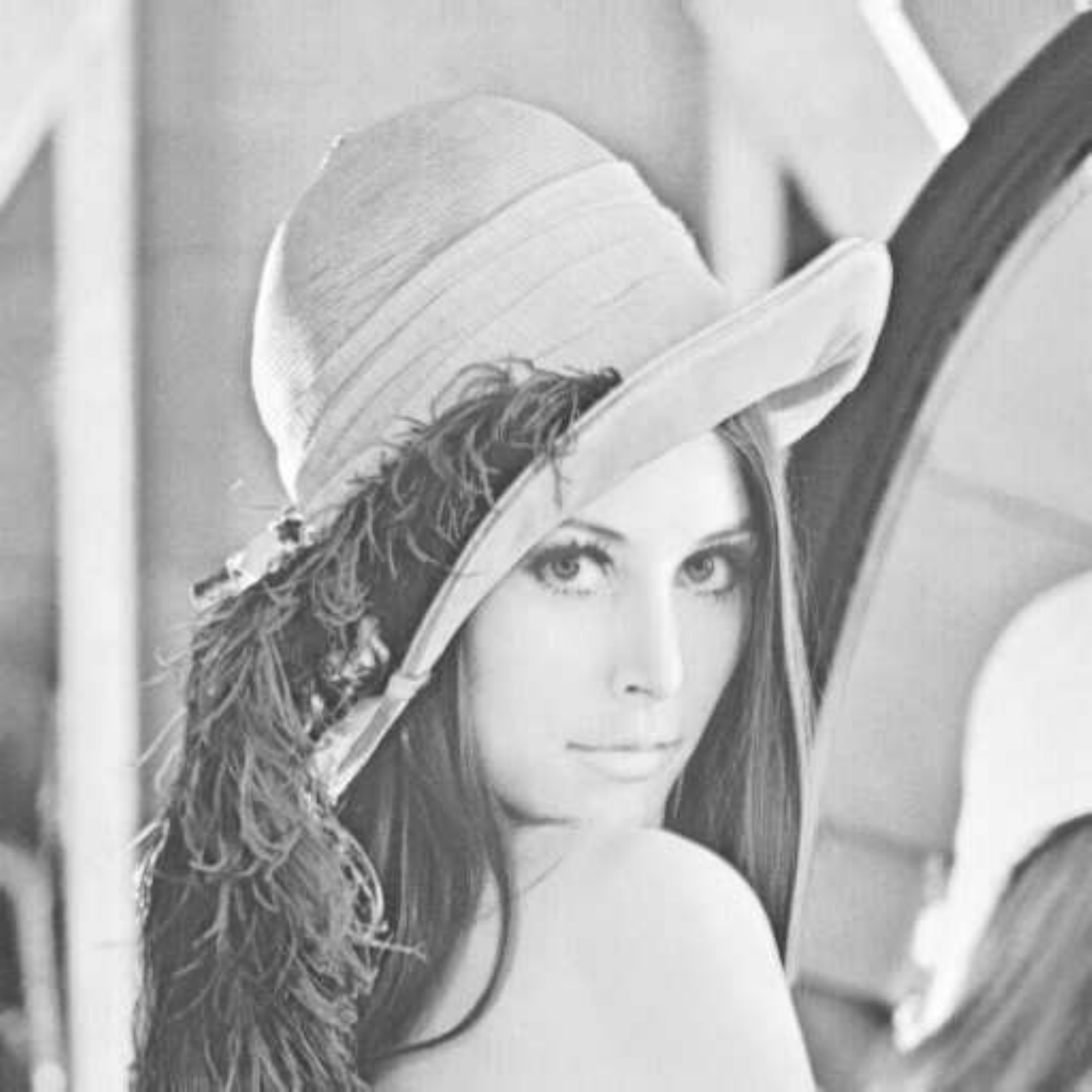}
a)
\end{minipage}
\begin{minipage}[t]{\imgwidth}
\centering
\includegraphics[width=\textwidth]{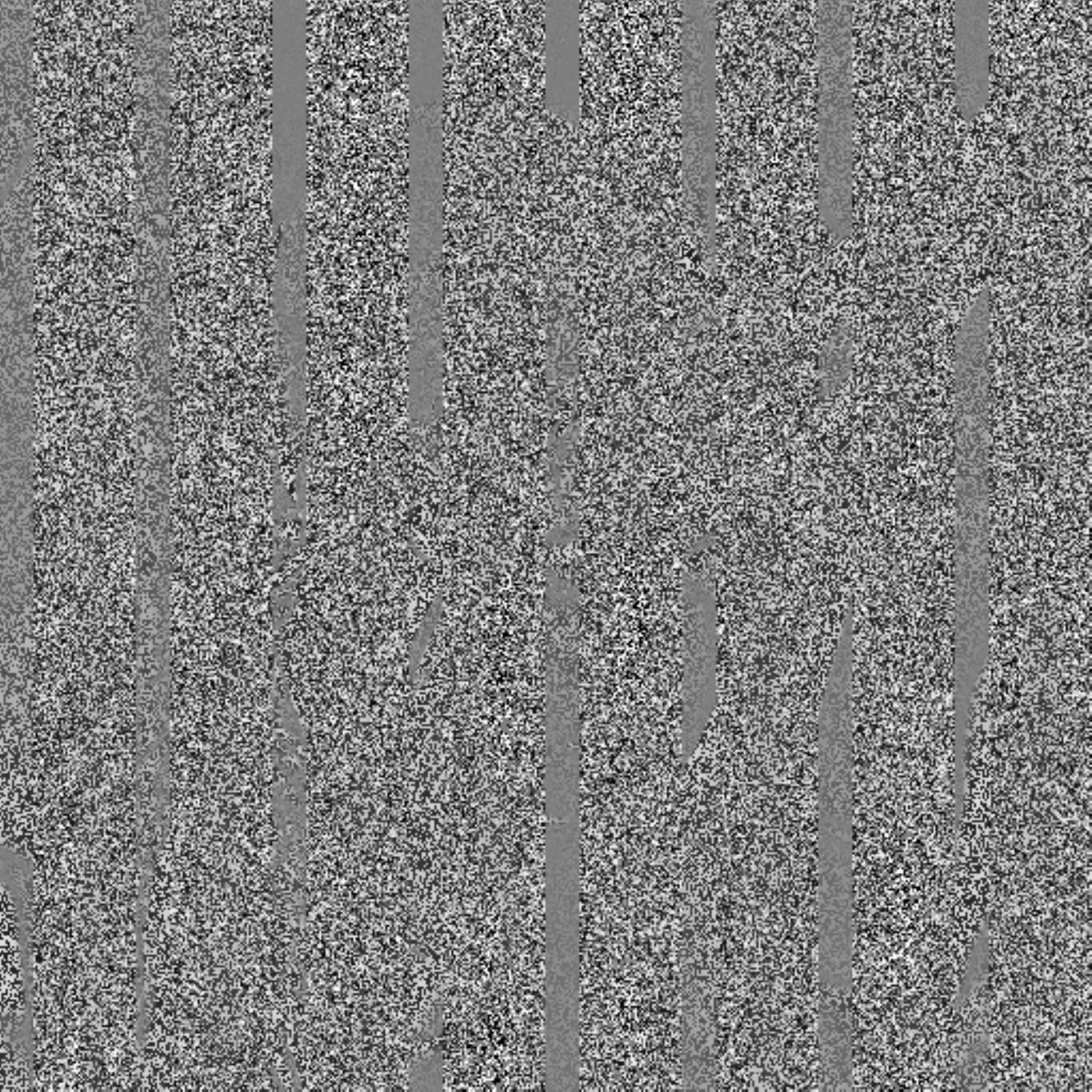}
b)
\end{minipage}
\caption{The encryption result when $K=``3C1DE8FF0151FF012840"$
(represented in hexadecimal format, the same hereinafter): a) the
red channel of the plain-image ``Lenna"; b) the red channel of the
cipher-image. For the other two color channels we have obtained
similar results.}\label{figure:weakkeyK4toK9}
\end{figure}

\begin{proposition}\label{proposition:OddBinom}
Given $n>1$ functions, $f_1(x),\ldots,f_n(x)$, assume that each
function is $x\oplus a$ with probability $p$ and is $x$ with
probability $1-p$, where $a\in\mathbb{Z}$. Then, the probability of
the composition function $F(x)=f_1\circ\cdots\circ f_n(x)=x\oplus a$
is $P=(1-(1-2p)^n)/2$.
\end{proposition}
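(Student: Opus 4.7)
The plan is to exploit the fact that XOR with a fixed constant is an involution, so the composition $F(x)$ depends only on the \emph{parity} of the number of $f_i$ that act as $x\oplus a$ rather than as the identity. Concretely, if $N$ denotes the (random) number of indices $i$ with $f_i(x)=x\oplus a$, then by associativity and commutativity of XOR one has $F(x)=x\oplus(a\cdot(N\bmod 2))$, so $F(x)=x\oplus a$ iff $N$ is odd, and $F(x)=x$ iff $N$ is even. Since the $f_i$ are independent, $N\sim\mathrm{Binomial}(n,p)$, and the claim reduces to showing
\[
\Pr[N\text{ odd}]=\sum_{k\text{ odd}}\binom{n}{k}p^k(1-p)^{n-k}=\frac{1-(1-2p)^n}{2}.
\]

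I would give the shortest derivation via the standard generating-function identity
\[
((1-p)+p)^n-((1-p)-p)^n=2\!\!\sum_{k\text{ odd}}\binom{n}{k}p^k(1-p)^{n-k},
\]
whose left-hand side is $1-(1-2p)^n$, giving the formula directly. Alternatively, one can argue by induction on $n$: assuming $P_n:=\Pr[N_n\text{ odd}]=(1-(1-2p)^n)/2$, conditioning on whether $f_{n+1}$ is $x\oplus a$ or not gives the one-step recurrence
\[
P_{n+1}=(1-p)P_n+p(1-P_n)=(1-2p)P_n+p,
\]
and substituting the inductive hypothesis yields $P_{n+1}=(1-(1-2p)^{n+1})/2$, closing the induction. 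The base case $n=1$ is immediate, since $P_1=p=(1-(1-2p))/2$.

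There is no real obstacle here: once the parity observation is in place, the rest is a routine binomial identity. The only mildly delicate point is to state clearly that the $f_i$ are independent (so that $N$ is genuinely binomial) and that the involutive property $(x\oplus a)\oplus a=x$ is what collapses the composition to a function of $N\bmod 2$; after that, either the generating-function computation or the one-line induction completes the proof.
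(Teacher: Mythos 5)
Your proposal is correct and follows essentially the same route as the paper: the paper likewise reduces the claim to the parity of the number of subfunctions acting as $x\oplus a$ and evaluates the odd-index binomial sum $\sum_{k\ \mathrm{odd}}\binom{n}{k}p^k(1-p)^{n-k}$ via the identity $\bigl((1-p)+p\bigr)^n-\bigl((1-p)-p\bigr)^n=2\sum_{k\ \mathrm{odd}}\binom{n}{k}p^k(1-p)^{n-k}$ (written there with $t=p/(1-p)$ after factoring out $(1-p)^n$). Your alternative one-step induction is a harmless variant of the same computation, so no further changes are needed.
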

\begin{proof}
Assume that $k=\lceil n/2\rceil$. Then, $n=2k$ if it is an even
integer and $n=2k-1$ when it is odd. To ensure
$F(x)=f_1\circ\cdots\circ f_n(x)=x\oplus a$, the number of
subfunctions that are equal to $x\oplus a$ should be an odd integer.
So,
\begin{eqnarray*}
P & = & \sum_{i=1}^k\binom{n}{2i-1}p^{2i-1}(1-p)^{n-(2i-1)}\\
& = &
(1-p)^n\cdot\sum_{i=1}^k\binom{n}{2i-1}(p/(1-p))^{2i-1}\\
& = &
(1-p)^n\cdot\frac{\left(1+p/(1-p)\right)^n-\left(1-p/(1-p)\right)^n}{2}\\
& = & (1-(1-2p)^n)/2.
\end{eqnarray*}
This completes the proof of the proposition.
\end{proof}

By letting Eq.~(\ref{equation:weakkey_terms}) hold for the three
encryption functions $E_1(x)$, $E_2(x)$ and $E_3(x)$, we found a
list of weak keys of this kind, as shown in
Table~\ref{table:WeakKeyLeaking}.

\begin{table*}[htbp]
\centering
\caption{Some weak keys that cause leaking of visual
information.}\label{table:WeakKeyLeaking}
\begin{tabular}{c|c}
\hline Weak keys & Visual information leaked from\\
\hline $(K_4,K_5),(K_7,K_8)\in\{(0,0),(255,1)\}$ & Channel R\\
\hline $(K_5,K_6),(K_8,K_9)\in\{(0,0),(255,1)\}$ & Channel G\\
\hline $(K_6,K_4),(K_9,K_7)\in\{(0,0),(255,1)\}$ & Channel B\\
\hline $(K_4,K_5,K_6,K_7,K_8,K_9)=(0,0,0,0,0,0)$ & the whole
plain-image\\\hline
\end{tabular}
\end{table*}

\subsubsection{Partially equivalent keys with respect to $K_7\sim K_9$: Class 1}

Observing Eq.~(\ref{equation:X0}), one can see that the value of
$X_0$ remains unchanged if the following segments of $K_7,K_8,K_9$
exchange their values: $K_7\bmod 16$, $\lfloor K_7/16\rfloor$,
$K_8\bmod 16$, $\lfloor K_8/16\rfloor$, $K_9\bmod 16$, $\lfloor
K_9/16\rfloor$. Now let us find out what will happen if we exchange
$K_9\bmod 16$ and $\lfloor K_9/16\rfloor$, i.e., exchange the upper
half and the lower half of $K_9$. In this case, since the encryption
of the red value of each pixel is independent of $K_9$, the red
channel of the cipher-image will remain unchanged. Similar results
also exist for $K_7$ and $K_8$, which correspond to unchanged blue
and green channels of the plain-image, respectively. This problem
reduces the subkey-space of $(K_7,K_8,K_9)$ from $256^3$ to
$(16+(256-16)/2)^3=136^3$.

\subsubsection{Partially equivalent keys with respect to $K_7\sim K_9$: Class 2}
\label{sssec:PartiallyEquivalentKey2}

As remarked in Sec.~\ref{ssec:basicproperties}, each encryption
subfunction $g_{a_0,a_1,b_0,b_1,Y}(x)$ can be represented in one of
the following two formats: $x\oplus \alpha$ and $x\dotplus \beta$.
The following two facts about $\oplus$ and $\dotplus$ will lead us
to construct another class of partially equivalent keys.

\begin{fact}
$\forall\ a\in\{0,\ldots,255\}$, $a\oplus 128=a\dotplus 128$.
\label{lemma:xor128}
\end{fact}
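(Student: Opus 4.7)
The plan is to verify the identity by a direct case analysis on the most significant bit of $a$, since $128 = 2^7$ is exactly the bit that both operations touch. I would write $a = 128q + r$ uniquely, where $q \in \{0,1\}$ is the top bit of the 8-bit representation of $a$ and $r \in \{0, \ldots, 127\}$ is the value of the lower seven bits.

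First I would handle $\oplus$. Since $128$ has only its most significant bit set, $a \oplus 128$ flips the top bit of $a$ while leaving the lower seven bits alone, yielding $128(1-q) + r$. Next I would handle $\dotplus$. If $q = 0$, then $a = r < 128$, so $a + 128 = r + 128 < 256$ and hence $a \dotplus 128 = r + 128 = 128(1-q) + r$. If $q = 1$, then $a = 128 + r \geq 128$, so $a + 128 = 256 + r \geq 256$ and hence $a \dotplus 128 = (256 + r) - 256 = r = 128(1-q) + r$. In both cases the two operations produce the same value, so $a \oplus 128 = a \dotplus 128$.

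The argument is essentially a two-line case split, so there is no real obstacle; the only subtlety is keeping the bit-level and arithmetic-level descriptions of $a$ synchronized through the definition $a = 128q + r$. This decomposition is what lets a single computation handle both the XOR (a bitwise operation) and the modular addition (an arithmetic operation) in a uniform way.
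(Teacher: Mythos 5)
Your proof is correct: the decomposition $a=128q+r$ with $q\in\{0,1\}$, $r\in\{0,\ldots,127\}$ cleanly shows both $a\oplus 128$ and $(a+128)\bmod 256$ equal $128(1-q)+r$. The paper states this fact without proof, treating it as elementary, and your two-case argument is exactly the standard verification it implicitly relies on.
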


\begin{fact}\label{corollary:xor128}
$\forall\ a,b\in \mathbb{Z}$, $(a\oplus 128)\dotplus b=(a\dotplus
b)\oplus 128$.
\end{fact}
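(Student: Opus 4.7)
The plan is to reduce Fact~\ref{corollary:xor128} to Fact~\ref{lemma:xor128} by viewing XOR-with-$128$ as a modular addition of $128$, after which the statement collapses to commutativity and associativity of $\dotplus$.

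More concretely, I would proceed in three short steps. First, apply Fact~\ref{lemma:xor128} on the left-hand side to rewrite $a\oplus 128 = a\dotplus 128$, so that
\[
(a\oplus 128)\dotplus b \;=\; (a\dotplus 128)\dotplus b.
\]
Second, use that $\dotplus$ is just addition modulo~$256$, hence associative and commutative on $\mathbb{Z}/256\mathbb{Z}$, to reorder the summands:
\[
(a\dotplus 128)\dotplus b \;=\; (a\dotplus b)\dotplus 128.
\]
Third, apply Fact~\ref{lemma:xor128} in the opposite direction to the quantity $a\dotplus b\in\{0,\ldots,255\}$, obtaining
\[
(a\dotplus b)\dotplus 128 \;=\; (a\dotplus b)\oplus 128,
\]
which is the desired right-hand side.

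The only point requiring a brief check is the scope of applicability of Fact~\ref{lemma:xor128}. That fact is stated for elements of $\{0,\ldots,255\}$, whereas the present corollary allows $a,b\in\mathbb{Z}$; but XOR-by-$128$ naturally acts on the residue of $a$ modulo $256$, and at the second invocation the argument $a\dotplus b$ already lies in $\{0,\ldots,255\}$ by construction, so both applications are legal. No case analysis on carry bits is needed, since the delicate interaction between addition and XOR in the high-order bit has been entirely absorbed into Fact~\ref{lemma:xor128}. I do not anticipate any real obstacle here; the content of the corollary is essentially that $a \mapsto a\oplus 128$ and $a \mapsto a\dotplus b$ are commuting translations on $\mathbb{Z}/256\mathbb{Z}$.
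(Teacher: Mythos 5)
Your proof is correct and follows exactly the route the paper intends: the paper states this fact without an explicit proof, but its placement (and its internal label as a corollary of Fact~\ref{lemma:xor128}) corresponds precisely to your reduction --- rewrite $a\oplus 128$ as $a\dotplus 128$, reorder the additions modulo $256$, and convert back via the same fact applied to $a\dotplus b\in\{0,\ldots,255\}$. Your side remark on the scope ($a,b\in\mathbb{Z}$ versus bytes) is handled appropriately, since XOR with $128$ only affects the value modulo $256$, so no further argument about carries is needed.
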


Fact~\ref{corollary:xor128} means that a change in the MSB (most
significant bit) of $x$, $a_0$, $a_1$, $b_0$, $b_1$ of any
encryption subfunction $g_{a_0,a_1,b_0,b_1,Y}(x)$ is equivalent to
XORing 128 on the output of the composition function $E_i(x)$.

Next, Fact~\ref{corollary:xor128} is used to figure out the second
class of partially equivalent keys about $K_7\sim K_9$. First,
choose any two subkeys from $K_7\sim K_9$. Without loss of
generality, let us take $K_7$ and $K_8$. Then, given a secret key
$K$ that satisfies $K_7<128$ and $K_8\geq 128$ (or, $K_7\geq 128$
and $K_8<128$), let us change it into another key $\widetilde{K}$ by
setting $\widetilde{K}_7=K_7\oplus 128$ and
$\widetilde{K}_8=K_8\oplus 128$. From Eq.~(\ref{equation:X0}), it is
easy to see that $X_0$ remains the same for the two keys. This means
that both the global and the local chaotic maps have the same
dynamics throughout the encryption procedure for the two keys, and
that the difference on ciphertexts is determined only by the
MSB-changes of $K_7$ and $K_8$. In the following, to analyze the
influence of the MSB-changes on the ciphertexts, we consider the
three color channels separately.

First, consider the encryption process of the green channel of the
plain-image, in which $K_7$ is not involved at all. Assuming that
the chaotic trajectory $\{Y_i\}$ distributes uniformly within the
interval $[0.1,0.9]$, the probability that $K_8$ has an effect on
each encryption subfunction is $p=3/8$. If $K_8$ appears for an even
number of times in the total $K_{10}$ encryption subfunctions, then
the value of $E_2(G)$ will remain the same for the two keys $K$ and
$\widetilde{K}$; otherwise, $E_2(G)$ changes its MSB. Thus, using
the same deduction as given in the proof of
Proposition~\ref{proposition:OddBinom}, the probability that
$E_2(G)$ remains unchanged can be calculated to be
$P_2=(1+(1-2p)^{K_{10}})/2=(1+4^{-K_{10}})/2$. This means that more
than half of all green pixel values in the ciphertexts are identical
in probability for the two keys $K$ and $\widetilde{K}$.

For the blue channel, $K_8$ is not involved in the encryption
process. So, following a similar deduction, the probability that
$E_3(B)$ remains unchanged is $P_3=(1+4^{-K_{10}})/2=P_2$.

For the red channel, both $K_7$ and $K_8$ are involved, but their
differences are neutralized for the encryption subfunction
$x\dotplus (K_7+K_8)$. So, the probability that the differences in
$K_7$ and $K_8$ have an effect on the ciphertext is reduced to be
$p=2/8=1/4$. Thus, the probability that $E_1(R)$ remains unchanged
becomes $P_1=(1+2^{-K_{10}})/2>P_2=P_3$.

Combining all the above analyses together, it is expected that more
than half of all pixel values in the cipher-images will be identical
for the two keys $K$ and $\widetilde{K}$. In addition, for other
different pixel values, the XOR difference is always equal to 128.
By enumerating all possibilities about this security problem, one
can conclude that the subkey-space of $(K_7,K_8,K_9)$ is reduced
from $256^3$ to $4\cdot 128^3=256^3/2$.

To verify the above theoretical results, we have carried out some
experiments for a plain-image of size $512\times 512$. One result is
shown in Fig.~\ref{figure:EquivalentKey2}, in which the number of
identical pixel values in red, green and blue channels are 131241
(50.06\%), 130864 (49.92\%) and 131383 (50.12\%), respectively.

\begin{figure*}[htbp]
\centering
\begin{minipage}[t]{\imgwidth}
\centering
\includegraphics[width=\textwidth]{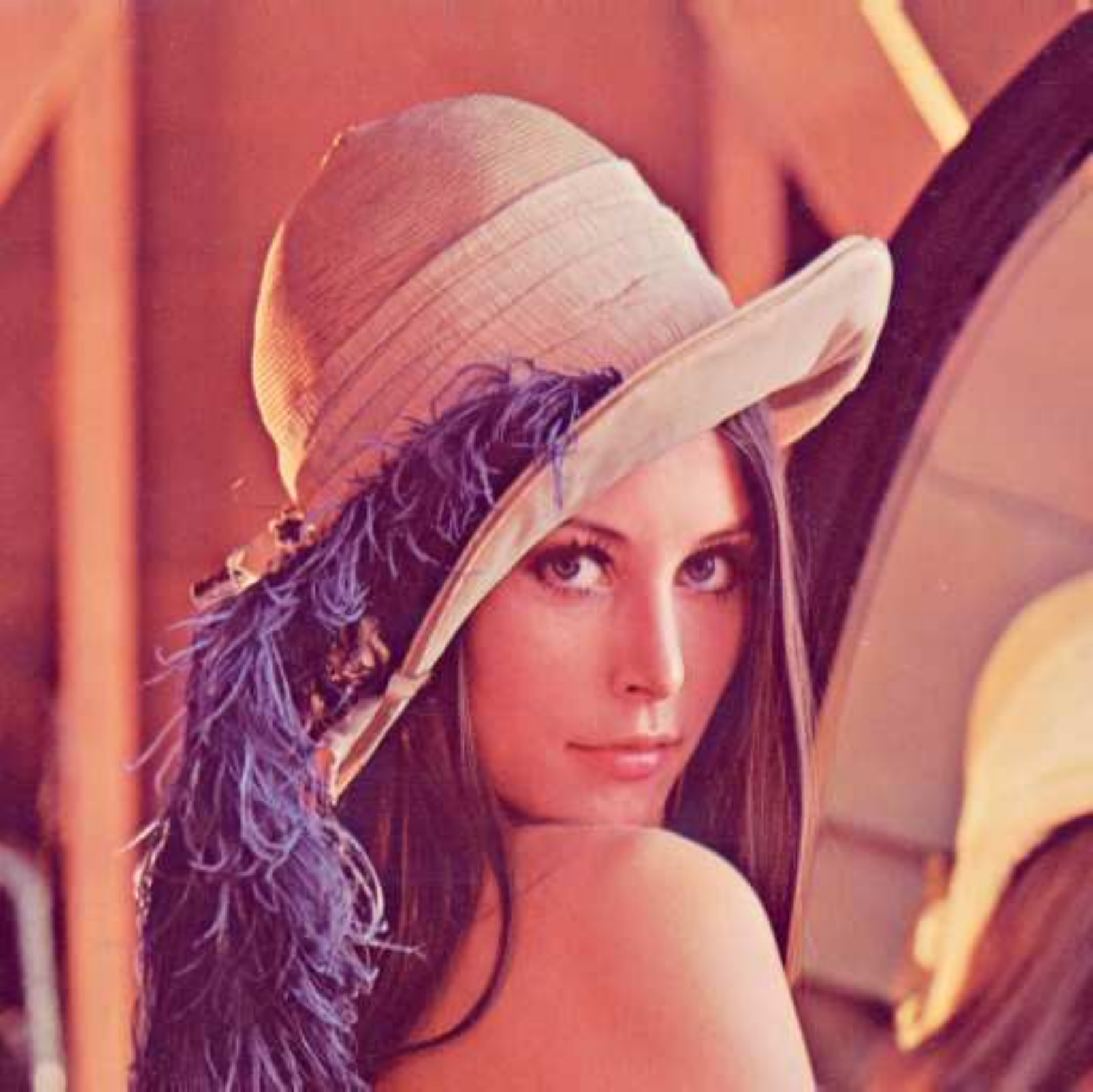}
a)
\end{minipage}
\begin{minipage}[t]{\imgwidth}
\centering
\includegraphics[width=\textwidth]{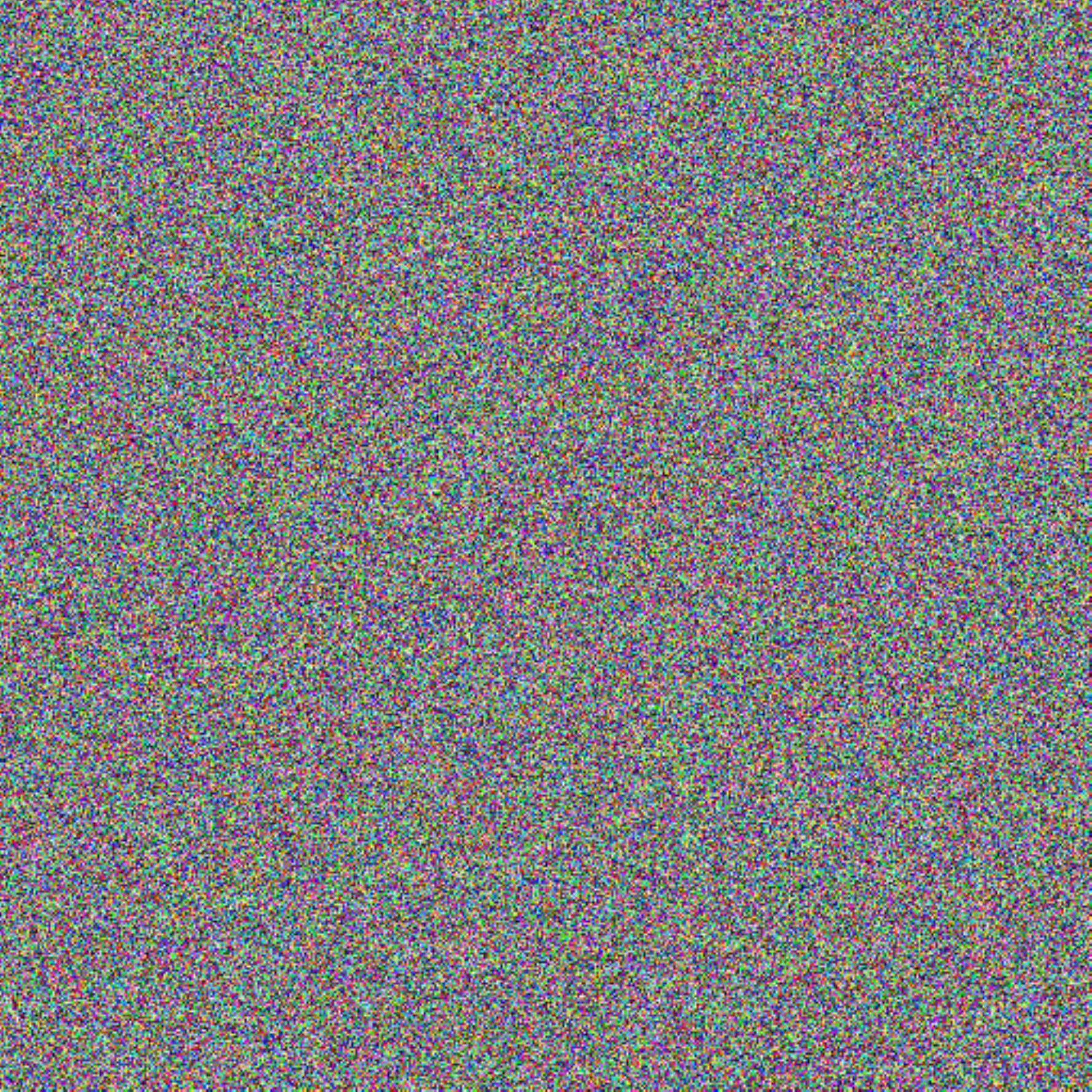}
b)
\end{minipage}
\begin{minipage}[t]{\imgwidth}
\centering
\includegraphics[width=\textwidth]{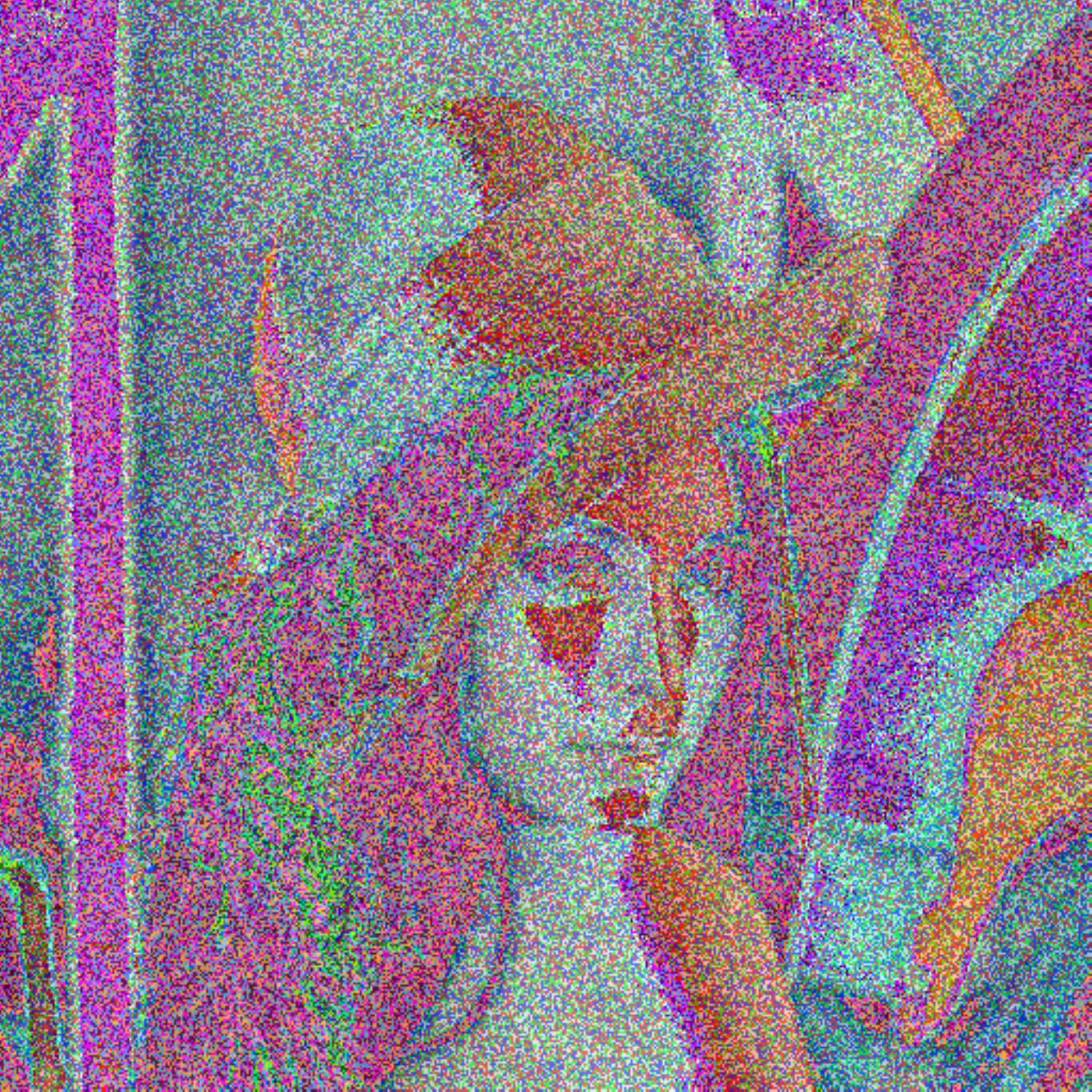}
c)
\end{minipage}
\caption{The decryption result with partially equivalent keys of
Class 2: a) the plain-image ``Lenna"; b) the cipher-image
corresponding to $K=``1A93DF25CF78\underline{DC44}E160"$; c) the
decryption result of subfigure b with a different key
$\widetilde{K}=``1A93DF25CF78\underline{5CC4}E160"$.}
\label{figure:EquivalentKey2}
\end{figure*}

Finally, it is worth mentioning that there exists an internal
relationship between the sub-images $\mathbb{I}_j$ and
$\mathbb{I}_{j+T/2}$, where $j\in\{0,\cdots,T/2-1\}$, which can be
easily deduced from the following fact about the updating process of
the subkeys: $K_i+K_{10}\cdot T/2=K_i+128\cdot
K_{10}/\gcd(K_{10},256)\equiv K_i+128=K_i\oplus 128\pmod{256}$.

\subsubsection{Reduction of the key space}

Based on the above analyses, we now summarize the influence of
invalid, weak and equivalent keys on the key space in
Table~\ref{table:keyspace}. According to the table, one can roughly
estimate that the size of key space is reduced to $2^{75}$, which is
somewhat smaller than $2^{80}$ (the one claimed in
\cite[Sec.~3.3]{Pareek:ImageEncrypt:IVC2006}).

\begin{table*}[htbp]
\center \caption{Reduction of the key space due to the existence of
invalid keys, weak keys and partially equivalent
keys.}\label{table:test}
\begin{tabular}{c|c|c}
\hline Subkeys & Size of reduced subkey-space & Reason\\
\hline\hline $K_1\sim K_3$ & - & $Y_0=0$\\
\hline $K_4\sim K_9$ & $2^{48}-5592406\approx 2^{48}$ & $X_0=0$\\
\hline $K_7\sim K_9$ & $136^3/2=2^{20.2624}$ & Equivalent key of Classes 1 and 2\\
\hline $K_{10}$ & $<(255-128-1)=126$ &
Weak keys about $K_{10}$\\
\hline
\end{tabular}\label{table:keyspace}
\end{table*}

\subsection{Guessing $K_{10}$ and $\{K_i\}_{i=1}^9$ separately}

The encryption process of the first block $I^{(16)}(0)$ depends only
on the secret values $Y_0$ and $K_{10}$. In other words, for the
first block one can consider $(Y_0,K_{10})$ as an equivalent to the
original key $K$. Then, by guessing the value of $(Y_0,K_{10})$ one
can get the value of $K_{10}$ with complexity $O(2^{32})$. Thus, the
other subkeys can be separately guessed with complexity $O(2^{72})$.
The total complexity of such an enhanced brute-force attack is
$O(2^{32}+2^{72})=O(2^{72})$, which is smaller than $O(2^{80})$, the
expected complexity of a simple brute-force attack.

\subsection{Guessing $K_{10}$ with a chosen plain-image}
\label{ssect:GuessingK10}

As remarked in Sec.~\ref{ssec:basicproperties}, all 16-pixel blocks
in $\mathbb{I}_j=\bigcup_{k=0}^{N_T-1}I^{(16)}(T\cdot k+j)$ are
encrypted with the same subkeys. If these blocks also correspond to
the same values of $Y_0$, then all the three encryption functions
for the R, G, B channels will become identical. Precisely, given two
identical blocks, $I^{(16)}(k_0)$ and $I^{(16)}(k_1)$, one can see
that the corresponding cipher-blocks will also become identical, if
the following two requirements are satisfied:
\begin{enumerate}
\item[(A)] the distance of the two blocks is a multiple of $T$,
i.e., $(k_0-k_1)\mid T$;

\item[(B)] $Y_0^{(k_0)}=Y_0^{(k_1)}$, where $Y_0^{(k_0)}$ and
$Y_0^{(k_1)}$ denote the values of $Y_0$ corresponding to the two
16-pixel blocks.
\end{enumerate}

Therefore, if the probability of the two cipher-blocks to be
identical is sufficiently large, one may use the distance between
them to determine the value of $T$ and narrow down the search space
of $K_{10}$.

It should be noted that the following two cases can both ensure the
requirement (B): 1) the sequences $\{P_j\}$ corresponding to the two
blocks are identical; 2) the sequences $\{P_j\}$ corresponding to
the two blocks are different (which may have $t\in\{0,\cdots,23\}$
identical elements), but the values of $Y_0$ are still identical.
The second case is tightly related to the ratio of 0-bits and 1-bits
in $B_2$. As an extreme example, when $B_2=0$ or $2^{24}-1$ (all the
bits of $B_2$ are 0 or 1), $B_2[P_j]$ will be fixed to be 0 or 1,
respectively. Assuming that the number of 1-bits in $B_2$ is $m$,
one can easily calculate the probability of
$B_2\left[P_j^{(k_0)}\right]=B_2\left[P_j^{(k_1)}\right]$ to be
$(m/24)^2+(1-m/24)^2$, and then the probability of
$Y_0^{(k_0)}=Y_0^{(k_1)}$ be $P_B=((m/24)^2+(1-m/24)^2)^{24}$. We
have carried out a large number of experiments to verify this
theoretical estimation and the results are shown in
Fig.~\ref{figure:probability_SameY0s_B}. In these experiments, all
possible values of $B_2$ were exhaustively generated to estimate the
probability (as the mean value) for $\min(m,24-m)\leq 4$, and
$\binom{24}{4}=10,626$ random keys were generated for
$\min(m,24-m)>4$.

\begin{figure*}[htbp]
\centering
\includegraphics[width=1.6\figwidth]{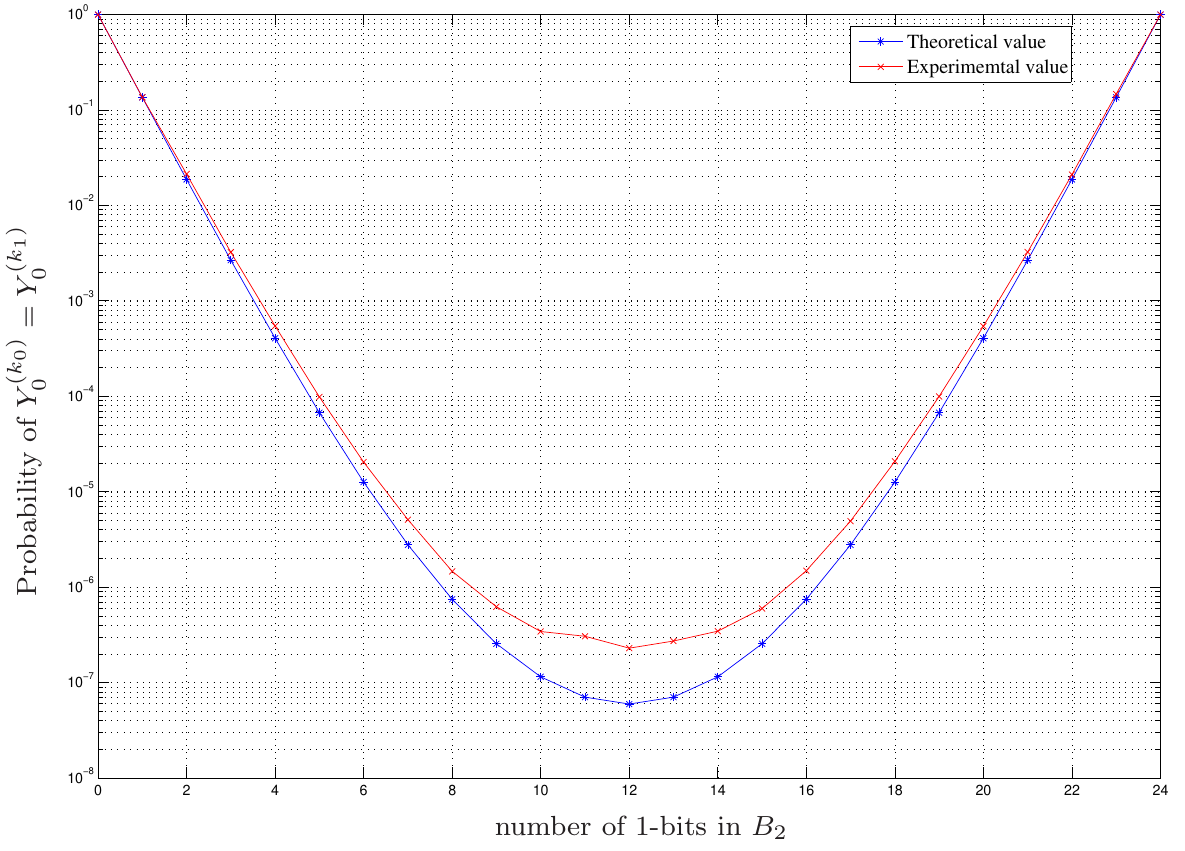}
\caption{Probability of $Y_0^{(k_0)}=Y_0^{(k_1)}$ with respect to
the number of 1-bits in $B_2$.}\label{figure:probability_SameY0s_B}
\end{figure*}

Since $Prob((k_0-k_1)\mid T)$ is $1/T$, the final probability that
both requirements hold is $P_B/T$. According to
Fig.~\ref{figure:probability_SameY0s_B}, this probability may be
large enough for an attacker to find some identical blocks in the
same set $\mathbb{I}_j$, especially when $\min(m,24-m)$ and $T$ are
both relatively small.

To show how the attack works, we chose a $512\times 512$ plain-image
in which all blocks are identical but all pixels in each block are
different from each other, and performed the attack for a secret key
$K=``2A84BCF35D70664E4740"$. As a result, we found 9 pairs of
identical blocks whose indices are listed in
Table~\ref{table:blockindex}. Because all these indices should
satisfy the requirement $(k_0-k_1)\mid T$, we can get an upper bound
of $T$ by solving their greatest common divisor of the differences
of the 9 indices. Thus, one immediately gets
$
\gcd(3161-1941,7083-2015,15255-3023,9163-4159,12113-5061,
16355-5507,12454-9166,12259-9655,13102-11090)=4.
$
This means $T\in\{2,4\}$, thus immediately leading to
$\gcd(K_{10},256)\in\{128,64\}$ and $K_{10}\in\{64,128,192\}$
according to Fact~\ref{fact:T_K10}. As can be seen, in this example
the size of the subkey space corresponding to $K_{10}$ is reduced
from 256 to 3, which is quite significant.

\begin{table*}[htbp]
\center\caption{The indices of 9 pairs of identical blocks in the
cipher-image corresponding to the plain-image of fixed value zero.}
\begin{tabular}{*{9}{c|}c}
\hline $k_0$ & 1941 & 2015 & 3023 & 4159 & 5061 & 5507 & 9166 & 9655
&
11090\\
\hline $k_1$ & 3161 & 7083 & 15255 & 9163 & 12113 & 16355 & 12454 &
12259 & 13102\\\hline
\end{tabular}
\label{table:blockindex}
\end{table*}

\subsection{Breaking $\{K_i\bmod 128\}_{i=4}^{10}$ with chosen-plaintext attack}
\label{ssec:CPA}

This subsection presents one of the most important results of this
work, since it shows how to partially break the encryption algorithm
using a very cost-effective chosen-plaintext attack, in which 128 or
even less plain-images are created. First, in
Sec.~\ref{sssec:Preliminaries} some mathematical devices are
introduced. Next, in Sec.~\ref{sssec:attack} the steps used to
recover subkeys $\{K_i\bmod 128\}_{i=4}^{10}$ are described in
detail. Finally some experimental results are given in
Sec.~\ref{subsubsection:experiments} validate the proposed attacks.

\subsubsection{Preliminaries}
\label{sssec:Preliminaries}

First, we prove some useful properties related to the composite
functions $E_i(x)$. These properties are essential for the attack to
be introduced below in this subsection.

\begin{theorem}\label{theorem:equivalentXor}
Let $F(x)=G_{2m+1}\circ\cdots\circ G_1(x)$ be a composite function
defined over $\{0,\ldots,2^n-1\}$, where $m,n\in\mathbb{Z}^+$,
$G_{2i}(x)=x\oplus\alpha_i$ for $i=1\sim m$,
$G_{2i+1}(x)=(x+\beta_i)\bmod 2^n$ for $i=0\sim m$ and
$\alpha_i,\beta_i\in\{0,\ldots,2^n-1\}$. If $F(x)=x\oplus \gamma$
for some $\gamma\in\{0,\ldots,2^n-1\}$, then
$\gamma\equiv\bigoplus_{i=1}^m\alpha_i\pmod{2^{n-1}}$.
\end{theorem}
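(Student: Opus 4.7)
The plan is a direct bit-by-bit calculation, exploiting the carry structure of addition. For each bit position $k\in\{0,\ldots,n-1\}$, walking through the composition gives
\[
F(x)^{(k)} \;=\; x^{(k)} \oplus \bigoplus_{i=1}^{m}\alpha_i^{(k)} \oplus \bigoplus_{i=0}^{m}\beta_i^{(k)} \oplus \bigoplus_{i=0}^{m}c_{i,k}(x),
\]
where $c_{i,k}(x)$ denotes the carry into bit $k$ during the $i$-th addition (hence $c_{i,0}(x)\equiv 0$). The hypothesis $F(x)=x\oplus\gamma$ thus forces $\bigoplus_{i} c_{i,k}(x)$ to be constant in $x$ for every $k$, with value $\gamma^{(k)} \oplus \bigoplus_i \alpha_i^{(k)} \oplus \bigoplus_i \beta_i^{(k)}$.

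The main step is to convert the constancy at bit $k+1$ into an explicit value for $\bigoplus_i c_{i,k}(x)$. Let $z_{2i}$ be the intermediate state just before the $i$-th addition. A short induction on operation index shows that bit $k$ of any intermediate state depends on $x^{(0)},\ldots,x^{(k)}$ only, and is linear in $x^{(k)}$ with coefficient $1$, because carries only flow upward. Hence $z_{2i}^{(k)} = x^{(k)} \oplus s_i^{(k)}$ for some $s_i^{(k)}$ depending only on $x^{(0)},\ldots,x^{(k-1)}$. Plugging this into $c_{i,k+1} = \mathrm{maj}(z_{2i}^{(k)},\beta_i^{(k)},c_{i,k})$ and expanding via the $\mathbb{F}_2$ identity $\mathrm{maj}(a,b,c)=ab\oplus ac\oplus bc$ isolates
\[
c_{i,k+1} \;=\; x^{(k)}\bigl(\beta_i^{(k)}\oplus c_{i,k}\bigr)\;\oplus\;[\text{terms free of }x^{(k)}].
\]
Summing over $i$ and demanding that the left-hand side be independent of $x^{(k)}$ forces the coefficient of $x^{(k)}$ to vanish, yielding $\bigoplus_{i} c_{i,k}(x) = \bigoplus_i \beta_i^{(k)}$ for all $x$. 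Substituting this back into the bit-$k$ expansion of $F(x)$ immediately gives $\gamma^{(k)} = \bigoplus_{i=1}^{m}\alpha_i^{(k)}$.

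This argument works whenever $k+1\leq n-1$, i.e.\ for $k\in\{0,1,\ldots,n-2\}$, and these bit equalities assemble exactly into the stated congruence $\gamma\equiv \bigoplus_i \alpha_i \pmod{2^{n-1}}$. The bit $k=n-1$ cannot be treated this way because there is no bit $n$ whose carry would cascade the same constraint — this is precisely why the theorem is stated modulo $2^{n-1}$ rather than $2^n$. The most delicate bookkeeping I expect is in the majority expansion: one must verify that after XOR-summing across all $m+1$ additions the only surviving $x^{(k)}$-linear term is the coefficient highlighted above. The symmetric form of $\mathrm{maj}$ makes this direct, but one needs to carefully track the implicit dependence of $s_i^{(k)}$ and $c_{i,k}$ on the lower bits of $x$ so that no hidden $x^{(k)}$-monomial survives the XOR sum.
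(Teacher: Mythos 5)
Your proposal is correct and follows essentially the same route as the paper's own proof: a bit-by-bit analysis in which the independence of output bit $k+1$ from input bit $x^{(k)}$ forces the XOR (parity) of the effective carry/addend bits at position $k$ to vanish, giving $\gamma^{(k)}=\bigoplus_i\alpha_i^{(k)}$ for $k\le n-2$ and hence the congruence modulo $2^{n-1}$. Your use of the majority-function carry recursion and explicit $\mathbb{F}_2$-linear coefficient extraction is just a cleaner formalization of the paper's carry-counting argument, including the same explanation of why the top bit escapes the constraint.
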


\begin{proof}
Let $x=\sum_{j=0}^{n-1}x_j\cdot 2^j$,
$\alpha_i=\sum_{j=0}^{n-1}\alpha_{i,j}\cdot 2^j$,
$\beta_i=\sum_{j=0}^{n-1}\beta_{i,j}\cdot 2^j$, and
$F(x)=\sum_{j=0}^{n-1}F_j(x)\cdot 2^j$.

The proof is based on the following fact.

\textit{If $F$ verifies $F(x)=x\oplus \gamma$ for some
$\gamma=\sum_{j=0}^{n-1}\gamma_j\cdot 2^j$, then, for any $j=0\sim
n-1$, the result of the computation of $F_{j}(x)$ depends only on
the value of the $j$-th bit of $x$, that is, $x_{j}$. In other
words, the value of $F_j(x)$ is independent of $F_{j^*}$ if
$j^*\neq j$.}

We are going to check the computation of $F(x)$ starting from the
least significant bit. To get the value of $F_0(x)$, we only need to
calculate
$\widetilde{F_0}(x)=(\cdots((x_0+\beta_{0,0})\oplus\alpha_{1,0}+\beta_{1,0})
\oplus\cdots\oplus\alpha_{m,0}+\beta_{m,0})$, and then get the least
significant bit of $\widetilde{F_0}(x)$.\footnote{Here, $+\bmod 2^n$
is replaced by $+$ in the calculation process, because $\bmod 2^n$
does not affect any bit of $F(x)$.} Note that the carry bit
generated in each $+$ operation influences only the most significant
bits $F_1(x)\sim F_{n-1}(x)$, and for the least significant bit of
$\widetilde{F_0}(x)$ the operation $+$ is equivalent to $\oplus$.
Therefore, we immediately get
$F_0(x)=x_0\oplus\beta_{0,0}\oplus\alpha_{1,0}\oplus\beta_{1,0}\cdots\oplus\alpha_{m,0}\oplus\beta_{m,0}=
x_0\oplus(\alpha_{1,0}\oplus\cdots\oplus\alpha_{m,0})\oplus(\beta_{0,0}\oplus\cdots\oplus\beta_{m,0})$.

Then, let us study how the carry bits generated by $+$ operations
in the calculation on $\widetilde{F_0}(x)$ affect the value of
$F_1(x)$, as an effort to determine the value of
$\beta_{0,0}\oplus\cdots\oplus\beta_{m,0}$. Note the following two
facts about carry bits:
\begin{itemize}
\item
when $\beta_{i,0}=0$, no carry bit appears for any value of $x_0$;

\item
when $\beta_{i,0}=1$, a carry bit appears when $x_0=0$ or 1 after
the operation $+\beta_{i,0}$, and only for one possible value of
$x_0$ there will be a carry bit\footnote{To be more precise, if
there is a carry bit when $x_0=0$, then there will not be a carry
bit when $x_0=1$ and vice versa.}.
\end{itemize}
Denoting the number of $\beta_{i,0}$ whose value equals to 1 by
$N_0$, the above facts mean that $N_0$ can be obtained by counting
carry bits when $x_0=0$ and when $x_0=1$. That is,
$N_0=\sum_{x_0\in\{0,1\}}N_0(x_0)=N_0(0)+N_0(1)$, where $N_0(x_0)$
denotes the number of carry bits generated in the calculation
process on $\widetilde{F_0}(x)$ with respect to $x_0$.

The independence of $F_1(x)$ of $x_0$ means that $N_0(0)=N_0(1)$,
and as a result $N_0=N_0(0)+N_0(1)=2N_0(0)$ is an even number.
This immediately leads to the conclusion
$\beta_{0,0}\oplus\cdots\oplus\beta_{m,0}=0$. Thus, $F_0(x)=x_0
\oplus(\alpha_{1,0}\oplus\cdots\oplus \alpha_{m,0})$.

Next, consider $F_1(x)$. In this case,
$\widetilde{F_1}(x)=(\cdots((x_1+\beta_{0,1}+
CB_0(x_0))\oplus\alpha_{1,1}+\beta_{1,1}+
CB_1(x_0))\oplus\cdots\oplus\alpha_{m,1}+\beta_{m,1}+CB_m(x_0))$,
where $CB_i(x_0)$ denotes the bit carrying from $\widetilde{F_0}(x)$
during the $i$-th $+$ operation (which is equal to 0 when a carry
bit does not exist). Then, due to the same reason as in the case of
$F_0(x)$, we have $F_1(x)=x_1\oplus(\alpha_{1,1}\oplus\cdots\oplus
\alpha_{m,1})\oplus(\beta_{0,1}\oplus
CB_0(x_0)\cdots\oplus\beta_{m,1}\oplus CB_m(x_0))$. Observing the
expression of $\widetilde{F_1}(x)$, we can easily note the following
facts:
\begin{itemize}
\item
when $\beta_{i,1}=CB_i(x_0)=0$: no carry bit appears for any value
of $x_1$;

\item
when $\beta_{i,1}=CB_i(x_0)=1$: one carry bit always appears for any
value of $x_1$;

\item
when $\beta_{i,1}=0$, $CB_i(x_0)=1$, or when $\beta_{i,1}=1$,
$CB_i(x_0)=0$: one carry bit appears for only one value of $x_1$.
\end{itemize}
As a summary, only one carry bit may be generated from a pair of
$\beta_{i,1}$ and $CB_i(x_0)$, which means that one can consider
$\beta_{i,1}+CB_i(x_0)$ as a single value $\beta_{i,1}^*(x_0)$.

Denoting the number of $\beta_{i,1}^*$ whose value equals to 1 by
$N_1(x_0)$, the above facts imply that
$N_1(x_0)=\sum_{x_1\in\{0,1\}}N_1(x_0,x_1)=N_1(x_0,0)+N_1(x_0,1)$,
where $N_1(x_0,x_1)$ means the number of carry bits generated in
the calculation process on $\widetilde{F_1}(x)$ with respect to
$x_0$ and $x_1$. Then, because the value of $F_2(x)$ is
independent of $x_1$, we can get $N_1(x_0,0)=N_1(x_0,1)$ and
$N_1(x_0)$ is even. This means that $\beta_{0,1}\oplus
CB_0(x_0)\cdots\oplus\beta_{m,1}\oplus CB_m(x_0)=0$ and so
$F_1(x)=x_1\oplus(\alpha_{1,1}\oplus\cdots\oplus \alpha_{m,1})$.

The above deduction can be simply applied to other bits $F_2(x)\sim
F_{n-1}(x)$. As a result, we get
$F_i(x)=x_i\oplus(\alpha_{1,i}\oplus\cdots\oplus \alpha_{m,i})$,
$\forall i=0\sim n-1$.

Finally, combining all the cases together, we have the result that
$F(x)\equiv
x\oplus(\alpha_1\oplus\cdots\oplus\alpha_m)\pmod{2^{n-1}}$. This
means that $\gamma\equiv\bigoplus_{i=1}^m\alpha_i\pmod{2^{n-1}}$ and
the theorem is thus proved.
\end{proof}

\begin{corollary}\label{corollary:equivalentXor}
For the image encryption scheme under study, if there exists
$\gamma\in\{0,\ldots,255\}$ such that $E_i(x)=x\oplus \gamma$, then
$\gamma\in\left\{\bigoplus_i\alpha_i,\left(\bigoplus_i\alpha_i\right)\oplus
128\right\}$.
\end{corollary}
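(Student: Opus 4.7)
The plan is to deduce the corollary as an immediate specialization of Theorem~\ref{theorem:equivalentXor} to the byte case $n=8$, after a small normalization step that brings the encryption function $E_i(x)$ into the exact form assumed by that theorem.

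Recall from Section~\ref{ssec:basicproperties} that $E_i(x)$ is a composition of subfunctions $G_j(x)$ of two alternating kinds, $x\oplus\alpha$ and $x\dotplus\beta$, and that it may begin or end with either kind (the four formats enumerated in that subsection). Theorem~\ref{theorem:equivalentXor}, on the other hand, assumes the canonical pattern $G_{2m+1}\circ\cdots\circ G_1$ in which the first and the last subfunctions are both additions. So the first step is to pad $E_i$ into canonical form: if the leading (respectively trailing) subfunction is an XOR, prepend (respectively append) a trivial addition $x\dotplus 0$. Because $x\dotplus 0=x$, these insertions change neither the value of $E_i$ on any input nor the list of XOR constants $\alpha_i$ occurring in the composition; they only append $\beta=0$ to the list of addition constants. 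One quickly checks that in each of the three non-canonical formats this yields a valid canonical composition of odd length, with the same $\bigoplus_i\alpha_i$ as before.

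After this normalization, Theorem~\ref{theorem:equivalentXor} applies directly with $n=8$. The hypothesis $E_i(x)=x\oplus\gamma$ therefore yields
\[
\gamma\equiv\bigoplus_i\alpha_i\pmod{2^{n-1}}=\pmod{128},
\]
which is to say that $\gamma$ and $\bigoplus_i\alpha_i$ share the same seven least significant bits and can differ only in the most significant bit (bit $7$, of weight $128$). Consequently $\gamma\in\{\bigoplus_i\alpha_i,\,(\bigoplus_i\alpha_i)\oplus 128\}$, which is exactly the claim of the corollary.

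There is no genuine technical obstacle here beyond the bookkeeping: the substantive content has already been packaged inside Theorem~\ref{theorem:equivalentXor}. The only point that needs to be spelled out is the legitimacy of padding with $x\dotplus 0$, and that the padded composition still respects the alternation of XOR and addition required by the hypotheses of the theorem; both are immediate.
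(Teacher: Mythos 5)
Your argument is correct and rests on the same key lemma as the paper (Theorem~\ref{theorem:equivalentXor}), but the reduction to that lemma is organized differently. The paper goes through the four formats of $E_i(x)$ one by one: format~1 is the canonical shape and is handled directly; a trailing XOR (formats~2 and~4) is absorbed into $\gamma$ by noting that $E_i(x)=x\oplus\gamma$ forces the inner composition to equal $x\oplus(\gamma\oplus\alpha_{\lceil len/2\rceil})$; a leading XOR (formats~3 and~4) is removed by the change of variable $x'=x\oplus\alpha_1$. You instead normalize uniformly by padding with the identity addition $x\dotplus 0$ at whichever end begins or ends with an XOR; this is legitimate precisely because the theorem's hypotheses allow $\beta_i=0$, it leaves the function and the multiset of $\alpha_i$'s untouched, and it preserves the required alternation and odd length. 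What your route buys is the elimination of the case analysis and of the bookkeeping about how $\gamma$ shifts under a trailing XOR; what the paper's route makes slightly more visible is where each $\alpha_i$ ends up in the final XOR sum for each format. One degenerate corner is glossed over in both treatments: if $E_i$ contains no XOR subfunction at all (e.g.\ $E_i(x)=x\dotplus\beta$ or the identity), then $m=0$ and the theorem, which assumes $m\in\mathbb{Z}^+$, does not literally apply; in your framework this is repaired by also padding with $\oplus\,0$ (so $m=1$, $\alpha_1=0$), or by checking directly that $x\dotplus\beta=x\oplus\gamma$ for all $x$ forces $\gamma\in\{0,128\}$, which is consistent with the empty XOR being $0$. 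With that remark added, your proof is complete.
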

\begin{proof}
Consider the four classes of $E_i(x)$ as shown in
Sec.~\ref{ssec:basicproperties}.

\begin{enumerate}
\item
$E_i(x)=((\cdots((x\dotplus\beta_1)\oplus\alpha_1)\cdots)\oplus\alpha_{\lceil
(len-1)/2\rceil})\dotplus\beta_{\lceil len/2\rceil}$: From
Theorem~\ref{theorem:equivalentXor}, one has
$\gamma\in\left\{\bigoplus_{i=1}^{\lceil(len-1)/2\rceil}\alpha_i,\left(\bigoplus_{i=1}^{\lceil(len-1)/2\rceil}\alpha_i\right)\oplus
128\right\}$.

\item
$E_i(x)=((\cdots((x\dotplus\beta_1)\oplus\alpha_1)\cdots)\dotplus\beta_{\lceil
(len-1)/2\rceil})\oplus\alpha_{\lceil len/2\rceil}$: From
Theorem~\ref{theorem:equivalentXor}, one has $\alpha_{\lceil
len/2\rceil}\oplus\gamma\in\left\{\bigoplus_{i=1}^{\lceil
(len-1)/2\rceil}\alpha_i,\left(\bigoplus_{i=1}^{\lceil
(len-1)/2\rceil}\alpha_i\right)\oplus 128\right\}$, which means
$\gamma\in\left\{\bigoplus_{i=1}^{\lceil
len/2\rceil}\alpha_i,\left(\bigoplus_{i=1}^{\lceil
len/2\rceil}\alpha_i\right)\oplus 128\right\}$.

\item
$E_i(x)=((\cdots((x\oplus\alpha_1)\dotplus\beta_1)\cdots)\oplus\alpha_{\lceil
(len-1)/2\rceil})\dotplus\beta_{\lceil len/2\rceil}$: Assuming that
$x'=x\oplus\alpha_1$, we have
$E_i(x)=x\oplus\gamma=x'\oplus(\alpha_1\oplus\gamma)$. Then,
applying Theorem~\ref{theorem:equivalentXor} on $x'$, we can easily
get
$\alpha_1\oplus\gamma\in\left\{\bigoplus_{i=2}^{\lceil(len-1)/2\rceil}\alpha_i,\left(\bigoplus_{i=2}^{\lceil(len-1)/2\rceil}\alpha_i\right)\oplus
128\right\}$, thus
$\gamma\in\left\{\bigoplus_{i=1}^{\lceil(len-1)/2\rceil}\alpha_i,\left(\bigoplus_{i=1}^{\lceil(len-1)/2\rceil}\alpha_i\right)\oplus
128\right\}$.

\item
$E_i(x)=((\cdots((x\oplus\alpha_1)\dotplus\beta_1)\cdots)\dotplus\beta_{\lceil
(len-1)/2\rceil})\oplus\alpha_{\lceil len/2\rceil}$: Using a similar
process to the above class, one gets
$\gamma\in\left\{\bigoplus_{i=1}^{\lceil
len/2\rceil}\alpha_i,\left(\bigoplus_{i=1}^{\lceil
len/2\rceil}\alpha_i\right)\oplus 128\right\}$.
\end{enumerate}
The above four conditions together complete the proof of the
corollary.
\end{proof}

From Corollary~\ref{corollary:equivalentXor} and
Eq.~(\ref{eq:alpha_i_set}), we get the following result:
\begin{eqnarray}
\gamma\bmod 128 & = & \bigoplus\nolimits_i\alpha_i\bmod 128 \nonumber\\
               & \in &  \mathbb{A}^*=\{x\bmod 128\;|\;x
                \in  \mathbb{A}\cup\{0\}\}.\label{eq:gamma_mod128_set}
\end{eqnarray}

Assuming that $a_0^*=a_0\bmod 128$ and $a_1^*=a_1\bmod 128$, we have
\begin{equation}\label{eq:A_star}
\mathbb{A}^*=\{0,127,a_0^*,a_1^*,a_0^*\oplus 127,a_1^*\oplus
127,a_0^*\oplus a_1^*,a_0^*\oplus a_1^*\oplus 127\}.
\end{equation}
Observing the above equation, we can easily notice the following
facts:
\begin{enumerate}
\item
when $a_0^*=a_1^*\in\{0,127\}$, $\#(\mathbb{A}^*)=2$;

\item
when $a_0^*\in\{0,127\}$ and $a_1^*\not\in\{0,127\}$ (or
$a_1^*\in\{0,127\}$ and $a_0^*\not\in\{0,127\}$),
$\#(\mathbb{A}^*)=4$;

\item
when $a_0^*,a_1^*\not\in\{0,127\}$ and $a_0^*\oplus
a_1^*\in\{0,127\}$, $\#(\mathbb{A}^*)=4$;

\item
when $a_0^*,a_1^*\not\in\{0,127\}$ and $a_0^*\oplus
a_1^*\not\in\{0,127\}$, $\#(\mathbb{A}^*)=8$.
\end{enumerate}
Apparently, if we can get the set $\mathbb{A}^*$, it will be
possible to get the values of $a_0^*$ and $a_1^*$. The complexity of
such a process is summarized as follows:
\begin{enumerate}
\item
when $\#(\mathbb{A}^*)=2$, there are only 2 possible values of
$(a_0^*,a_1^*)$: (0,127) or (127,0);

\item
when $\#(\mathbb{A}^*)=4$, assuming that
$\mathbb{A}^*=\{0,127,a,a\oplus 127\}$, there are 8 possible values
of $(a_0^*,a_1^*)$: $(0,a)$, $(0,a\oplus 127)$, $(127,a)$,
$(127,a\oplus 127)$, $(a,a)$, $(a,a\oplus 127)$, $(a\oplus 127,a)$,
$(a\oplus 127,a\oplus 127)$;

\item
when $\#(\mathbb{A}^*)=8$, there are 24 possible values of
$(a_0^*,a_1^*)$: $a_0^*\in\mathbb{A}^*/\{0,127\}$ and
$a_1^*\in\mathbb{A}^*/\{0,127,a_0^*,a_0^*\oplus 127\}$.
\end{enumerate}
One can see that in any case the complexity is much smaller than
$2^7\times 2^7=2^{14}$, the complexity of exhaustively searching all
the bits of $a_0^*$ and $a_1^*$. This idea is the key for the
chosen-plaintext attack proposed in this subsection.

Next, let us find out how to distinguish XOR-equivalent encryption
functions. According to
Proposition~\ref{proposition:functionequivalent}, one can achieve
such a goal by checking the following 255 equalities: $F(x_1)\oplus
F(x_1\oplus i)=i$, where $x_1$ is an arbitrary integer in
$\{0,\ldots,255\}$ and $i=1\sim 255$.

\begin{proposition}
Let $F(x)$ be a function defined over $\{0,\ldots,2^n-1\}$, where
$n\in\mathbb{Z}^+$. Then, $F(x)=x\oplus\gamma$ for any
$x\in\{0,\ldots,2^n-1\}$ if and only if the following requirement
holds: there exists $x_1\in\{0,\ldots,2^n-1\}$ such that
$F(x_1)\oplus F(x_1\oplus i)=i, \forall
i\in\{1,\ldots,2^n-1\}$.\label{proposition:functionequivalent}
\end{proposition}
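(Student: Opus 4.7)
The plan is to prove the two implications separately, treating the \emph{only if} direction as a one-line substitution and the \emph{if} direction by exhibiting the constant $\gamma$ explicitly from the single base point $x_1$ whose existence is assumed.

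For the forward direction, I would simply assume $F(x)=x\oplus\gamma$ and compute
\[
F(x_1)\oplus F(x_1\oplus i)=(x_1\oplus\gamma)\oplus(x_1\oplus i\oplus\gamma)=i
\]
for every $i\in\{1,\ldots,2^n-1\}$ and any choice of $x_1$, using only associativity of $\oplus$ and the fact that each element is its own inverse. This direction is mechanical and raises no difficulty.

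For the converse, the key idea is that the hypothesized $x_1$ serves as an ``anchor'' from which all other values of $F$ are determined. Set $\gamma:=F(x_1)\oplus x_1$. For arbitrary $x\in\{0,\ldots,2^n-1\}$, let $i:=x\oplus x_1$. If $i=0$ then $x=x_1$ and $F(x)=F(x_1)=x_1\oplus\gamma=x\oplus\gamma$. If $i\neq 0$, the hypothesis applied at this $i$ yields $F(x_1)\oplus F(x_1\oplus i)=i$, i.e.\ $F(x_1)\oplus F(x)=x_1\oplus x$, and rearranging gives $F(x)=F(x_1)\oplus x_1\oplus x=\gamma\oplus x$. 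Hence $F(x)=x\oplus\gamma$ on the whole domain.

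There is no real obstacle here, since $\{0,\ldots,2^n-1\}$ endowed with $\oplus$ is a group in which every element is self-inverse, so a function that behaves affinely at one point and preserves XOR-differences from that point is globally affine. The only subtlety worth flagging is that the hypothesis is phrased only for $i\in\{1,\ldots,2^n-1\}$ (excluding $i=0$), which I handle by splitting off the trivial case $x=x_1$ separately; once that case is dispatched, the non-zero range of $i$ is exactly what is needed to cover every other $x$ via the bijection $x\mapsto x\oplus x_1$.
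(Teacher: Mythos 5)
Your proposal is correct and follows essentially the same route as the paper: both directions are handled by the substitution $i=x\oplus x_1$, expressing $F(x)=x\oplus\bigl(x_1\oplus F(x_1)\bigr)$ for every $x$ (the paper additionally identifies the constant with $F(0)$ by taking $i=x_1$, which is only cosmetic). Your explicit handling of the $i=0$ case mirrors the paper's remark that the identity trivially extends to $i=0$, so there is no substantive difference.
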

\begin{proof}
The ``only if" part is obvious. Now, let us prove the ``if" part.
Note that $F(x_1)\oplus F(x_1\oplus i)=i$ also holds when $i=0$. So,
when $i=x\oplus x_1$, we have $F(x_1\oplus x\oplus
x_1)=F(x)=F(x_1)\oplus x\oplus x_1=x\oplus (x_1\oplus F(x_1))$. When
$i=x_1$, we have $F(x_1)\oplus F(x_1\oplus x_1)=x_1$ and then get
$x_1\oplus F(x_1)=F(0)$. Therefore, $F(x)=x\oplus F(0)$, where
$F(0)=\gamma$ is a fixed value.
\end{proof}

For the encryption functions $E_i(x)$ composed of $\oplus$ and
$\dotplus$, the above result can be further simplified. From
Proposition~\ref{proposition:127eq255}, it is enough to check the
following 127 equalities: $F(x_1)\oplus F(x_1\oplus d)=d$, where
$x_1$ is an arbitrary integer in $\{0,\ldots,255\}$ and
$d\in\{1,\cdots,127\}$.

\begin{proposition}
Consider any encryption function $E_i(x)$ $(i=1\sim 3)$ defined in
Eqs.~(\ref{equation:encryptR})$\sim$(\ref{equation:encryptB}). If
there exists $x_1\in\{0,\ldots,255\}$ such that $E_i(x_1)\oplus
E_i(x_1\oplus d)=d$, $\forall d\in\{1,\ldots,127\}$, then
$E_i(x)=x\oplus E_i(0)$.\label{proposition:127eq255}
\end{proposition}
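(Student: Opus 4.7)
The plan is to reduce the claim to Proposition~\ref{proposition:functionequivalent} (the ``255-equality'' characterization): I will show that the $127$ hypothesized equalities, combined with the special algebraic structure of $E_i$, already force the remaining equalities for $d\in\{128,\ldots,255\}$ to hold as well, after which Proposition~\ref{proposition:functionequivalent} applied with $n=8$ immediately yields $E_i(x)=x\oplus E_i(0)$.

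The key structural observation is a global commutation identity: $E_i(x\oplus 128)=E_i(x)\oplus 128$ for every $x\in\{0,\ldots,255\}$. This is where Fact~\ref{corollary:xor128} does the work. Indeed, $E_i$ is built out of two kinds of subfunctions, namely $x\mapsto x\oplus\alpha$ and $x\mapsto x\dotplus\beta$. XORing $128$ on the input trivially commutes with any $\oplus\alpha$, and by Fact~\ref{corollary:xor128} it also commutes with any $\dotplus\beta$ in the sense $(x\oplus 128)\dotplus\beta=(x\dotplus\beta)\oplus 128$. Propagating this through the whole composition that defines $E_i$ establishes the global identity.

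With this in hand, for $d\in\{128,\ldots,255\}$ I write $d=128\oplus d'$ with $d'=d-128\in\{0,\ldots,127\}$, and compute
\[
E_i(x_1\oplus d)=E_i\bigl((x_1\oplus d')\oplus 128\bigr)=E_i(x_1\oplus d')\oplus 128,
\]
so that $E_i(x_1)\oplus E_i(x_1\oplus d)=\bigl(E_i(x_1)\oplus E_i(x_1\oplus d')\bigr)\oplus 128$. For $d'=0$ this is $128=d$, and for $d'\in\{1,\ldots,127\}$ the hypothesis gives $d'\oplus 128=d$. Together with the hypothesized range $d\in\{1,\ldots,127\}$ and the trivial case $d=0$, I conclude that $E_i(x_1)\oplus E_i(x_1\oplus d)=d$ for all $d\in\{0,\ldots,255\}$, and Proposition~\ref{proposition:functionequivalent} finishes the proof.

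The only nontrivial step is spotting the MSB-commutation identity $E_i(x\oplus 128)=E_i(x)\oplus 128$ and recognizing that Fact~\ref{corollary:xor128} supplies exactly the ingredient needed to push $\oplus 128$ past every $\dotplus\beta$ factor; once that identity is in place, the reduction from $255$ test values to $127$ test values is a one-line substitution, so I do not anticipate any further obstacle.
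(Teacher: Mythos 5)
Your proof is correct and follows essentially the same route as the paper's: both exploit Fact~\ref{corollary:xor128} to get the MSB-commutation property $E_i(x\oplus 128)=E_i(x)\oplus 128$, use it to extend the $127$ hypothesized equalities to all $d\in\{1,\ldots,255\}$, and then invoke Proposition~\ref{proposition:functionequivalent}. Your write-up merely makes explicit the propagation of $\oplus 128$ through the composition, which the paper leaves implicit.
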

\begin{proof}
From Fact~\ref{corollary:xor128}, one has $E_i(x_1)\oplus
E_i(x_1\oplus 128)=128$ and $E_i(x_1)\oplus E_i(x_1\oplus j\oplus
128)=j\oplus 128$ for $j=1\sim 127$. This means that $E_i(x_1)\oplus
E_i(x_1\oplus j)=j$ holds $\forall j\in\{1,\ldots,255\}$. Then, from
Proposition~\ref{proposition:functionequivalent}, $E_i(x)=x\oplus
E_i(0)$.
\end{proof}

Next, let us investigate the probability that a given encryption
$E_i(x)$ is equivalent to $x\oplus\gamma$. Again, because the
theoretical analysis is quite difficult, we carried out a number of
random experiments with a $512\times 512$ plain-image for different
values of $K_{10}$, where $K_{1}\sim K_{9}$ were generated at
random. Generally speaking, this probability becomes smaller when
$K_{10}$ increases, but it fluctuates in a wide range for different
values of $K_1\cdots K_9$. Two typical examples are shown in
Fig.~\ref{figure:numberequivalentxor}, in which the XOR-equivalent
encryption functions involving the second kind of encryption
subfunctions (i.e., functions of the form $x\dotplus \beta$) and
those not involving these encryption subfunctions were counted
separately.

\begin{figure}[htbp]
\center
\begin{minipage}{1.2\figwidth}
\centering
\includegraphics[width=\textwidth]{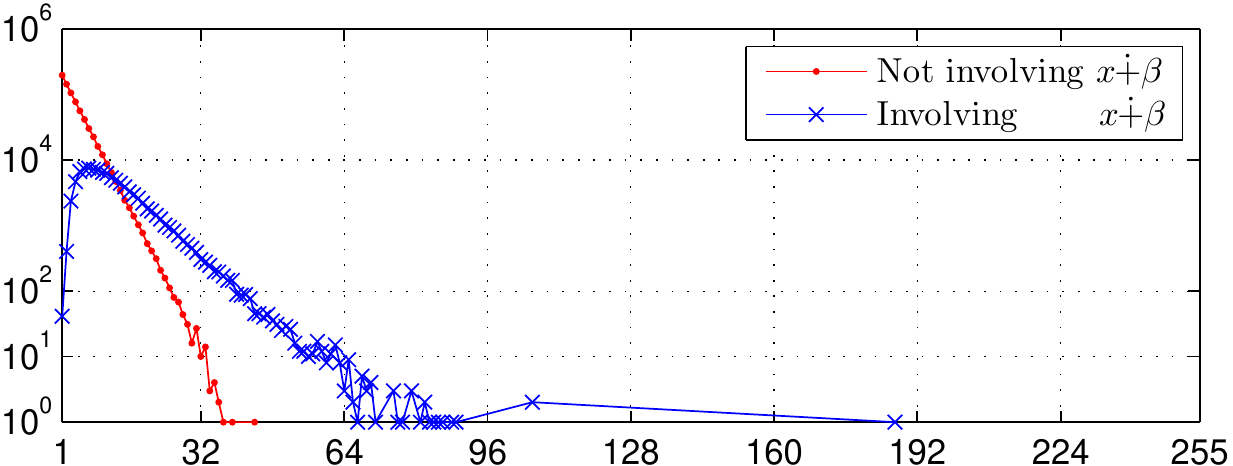}
a)
\end{minipage}\\
\begin{minipage}{1.2\figwidth}
\centering
\includegraphics[width=\textwidth]{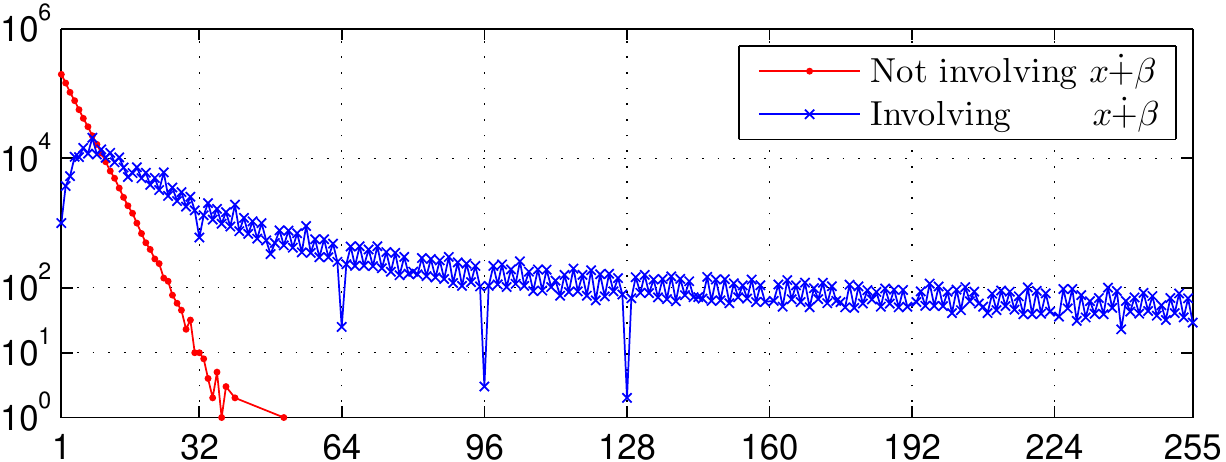}
b)
\end{minipage}
\caption{The number of pixels satisfying $E_1(x)=x\oplus \gamma$
under different values of $K_{10}$: a) $K_1\sim
K_9=``8DB87A1613D75ADF2D"$; b) $K_1\sim K_9=``2A84BCF35$
$D70664347"$.}\label{figure:numberequivalentxor}
\end{figure}

\subsubsection{Description of the attack}
\label{sssec:attack}

Based on the above discussions, a chosen-plaintext attack can be
developed by choosing 128 plain-images $\{I_l\}_{l=0}^{127}$ of size
$M\times N$ as follows: $I_l=I_0\oplus l$,\footnote{In this paper,
we use $I_l=I_0\oplus l$ to denote the following facts: $\forall
i=0\sim MN-1$, $R_l(i)=R_0(i)\oplus l$, $G_l(i)=G_0(i)\oplus l$ and
$B_l(i)=B_0(i)\oplus l$.} where $I_0$ can be freely chosen. To
facilitate the following description about the attack, denote the
encryption function $E_i(x)$ corresponding to the $j$-th pixel of
the $k$-th block by $E_{i,k,j}(x)$, and the parameters $a_0$, $a_1$
corresponding to the $k$-th block by $a_{0,i,k}$, $a_{1,i,k}$,
respectively. Similarly, for each updated subkey $K_j$, the value
corresponding to the $k$-th block is denoted by $K_{j,k}$. Then,
according to the discussion in
Sec.~\ref{sssec:PartiallyEquivalentKey2}, we have the following:
\begin{fact}
Given two XOR-equivalent encryption functions
$E_{i,k_1,j_1}(x)=x\oplus\gamma_{k_1,j_1}$ and
$E_{i,k_2,j_2}(x)=x\oplus\gamma_{k_2,j_2}$, if $k_1\equiv
k_2\pmod{T/2}$, then $\gamma_{k_1}\equiv\gamma_{k_2}\pmod{128}$.
\end{fact}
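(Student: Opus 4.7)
The plan is to combine the additive structure of the subkey update with Corollary~\ref{corollary:equivalentXor} in a short chain. First, I would show that $k_1\equiv k_2\pmod{T/2}$ forces the effective subkeys at the two blocks to agree modulo $128$. Second, I would exploit the fact that the set $\mathbb{A}^*$ in Eq.~(\ref{eq:A_star}), which controls $\gamma\bmod 128$, depends on the subkeys only through their residues modulo $128$, and that individual elements of $\mathbb{A}$ are themselves invariant modulo $128$ under a simultaneous MSB-flip of $a_0$ and $a_1$.

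For the first step I would use the explicit form $K_{i,k}=(K_{i,0}+k K_{10})\bmod 256$. Writing $\gcd(K_{10},256)=2^t$ with $t\le 8$, Fact~\ref{fact:T_K10} gives $T=2^{8-t}$, and $K_{10}/2^t$ is odd. Hence $(T/2)K_{10}=128\cdot(K_{10}/2^t)\equiv 128\pmod{256}$, so whenever $k_1\equiv k_2\pmod{T/2}$ the difference $K_{i,k_1}-K_{i,k_2}$ lies in $\{0,128\}\pmod{256}$ for every $i\in\{1,\ldots,9\}$. In particular, the subfunction parameters satisfy $a_{0,i,k_1}\equiv a_{0,i,k_2}\pmod{128}$ and $a_{1,i,k_1}\equiv a_{1,i,k_2}\pmod{128}$.

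For the second step I would observe that every element of $\mathbb{A}$ listed in Eq.~(\ref{eq:alpha_i_set}) is either fixed by the simultaneous replacement $a_0\to a_0\oplus 128$, $a_1\to a_1\oplus 128$ (namely $255$, $a_0\oplus a_1$, and $a_0\oplus a_1\oplus 255$) or has only its MSB flipped (namely $a_0$, $a_1$, $\overline{a_0}$, $\overline{a_1}$); in either case its residue modulo $128$ is preserved. Applying this termwise to the XOR-sum $\bigoplus_i\alpha_i$, which by Corollary~\ref{corollary:equivalentXor} represents $\gamma\bmod 128$, will yield $\gamma_{k_1,j_1}\equiv\gamma_{k_2,j_2}\pmod{128}$.

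The main obstacle I foresee is that the sequences of $\alpha_i$'s produced in the two blocks are not literally identical: the local chaotic trajectory depends on $Y_0$, which in turn is influenced by the MSBs of $K_1,K_2,K_3$ through Eq.~(\ref{eq:generateY0}). So the naive termwise matching needs to be replaced by an argument showing that, whichever $\mathbb{A}$-elements the chaotic map selects, the only way the XOR-sum $\bigoplus_i\alpha_i$ can depend on the subkeys is through the projection onto $\mathbb{A}^*$, which is common to both blocks by Step 1. Making this last implication watertight — in particular, keeping careful track of which of the two possible values $\{\bigoplus_i\alpha_i,\bigoplus_i\alpha_i\oplus 128\}$ of $\gamma$ the corollary selects, and showing that the two blocks land on the same representative modulo $128$ — is where I expect the real technical work.
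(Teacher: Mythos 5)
Your Steps 1 and 2 are sound, and they coincide with the only justification the paper itself offers (it never proves this Fact as a standalone statement, but refers back to Sec.~\ref{sssec:PartiallyEquivalentKey2}): over $T/2$ blocks every updated subkey is shifted by $(T/2)\cdot K_{10}=128\cdot K_{10}/\gcd(K_{10},256)\equiv 128\pmod{256}$, so the parameters at blocks $k_1\equiv k_2\pmod{T/2}$ agree modulo $128$, and by Fact~\ref{corollary:xor128} an MSB flip in a parameter can only flip the MSB of the output. The genuine gap is precisely the point you defer to ``the real technical work'', and it cannot be closed as stated: the two blocks are encrypted with different local trajectories (different $Y_0$, different $\hat{Y}_j$ sequences), so the two compositions generally select different subfunctions. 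Corollary~\ref{corollary:equivalentXor} together with Eq.~(\ref{eq:alpha_i_set}) then only yields that $\gamma_{k_1,j_1}\bmod 128$ and $\gamma_{k_2,j_2}\bmod 128$ are both elements of the \emph{same} eight-element set $\mathbb{A}^*$ of Eq.~(\ref{eq:A_star}); it does not make them the same element. Concretely, one block's XOR-equivalent function may reduce to $x\oplus a_0$ while the other's reduces to $x\oplus a_1$, giving residues $a_0^*\neq a_1^*$. A termwise comparison of the $\alpha_i$'s is legitimate only when the two compositions traverse the same sequence of subfunction types, which is the situation in the Class-2 equivalent-key analysis (two keys, same $X_0$, hence identical global and local trajectories), but not for two distinct blocks of one image.

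What your projection-onto-$\mathbb{A}^*$ idea actually establishes is the set-level invariance $\mathbb{A}^*_{i,k_1}=\mathbb{A}^*_{i,k_2}$, i.e.\ the two residues lie in a common set determined by the subkey residues modulo $128$; this weaker statement is also all that Step~2 of the attack uses, since there the residues from blocks congruent modulo $T/2$ are pooled exactly in order to accumulate several \emph{distinct} elements of $\mathbb{A}^*_{i,k}$ (a single common residue could never reproduce an eight-element $\mathbb{A}^*$ via the $\{0,127\}$-expansion). The pointwise congruence $\gamma_{k_1}\equiv\gamma_{k_2}\pmod{128}$ claimed in the Fact, and in your final sentence, is strictly stronger and does not follow from the ingredients you list; to obtain it one must add the hypothesis that the two pixels' compositions select the same subfunctions (identical trajectories), or else weaken the conclusion to membership in the common set. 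The right fix is to restate the conclusion at the level of sets, not to search for a cleverer termwise matching.
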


The proposed chosen-plaintext attack works in the following steps.

\textbf{Step 1} -- Finding XOR-equivalent encryption functions

For each color channel, scan the 128 plain-images to find encryption
functions $E_{i,k,j}$ that are equivalent to $x\oplus\gamma_k$,
where $\gamma_k=E_{i,k,j}(0)$ (according to
Proposition~\ref{proposition:127eq255}). Record all the
XOR-equivalent encryption functions corresponding to each color
channel in an $S_i\times 2$ matrix $\mymatrix{A}_i$, where $S_i$
denotes the number of blocks containing such encryption functions.
The first and the second rows of $\mymatrix{A}_i$ contain the block
indices and the corresponding values of $\gamma_k$, respectively.
Here, note that all XOR-equivalent encryptions in the same block are
identical, since they share the same parameters $a_{0,i,k}$ and
$a_{1,i,k}$.

The output of this step is composed of three matrices
$\{\mymatrix{A}_i\}_{1\leq i\leq 3}$, which require $\sum_{i=1}^3
2S_i$ memory units.

\textbf{Step 2} -- Estimating $\mathbb{A}_{i,k}^*$ (for each guessed
value of $K_{10}$)

Exhaustively search the value of $K_{10}$ and get the period
$T=256/\gcd(K_{10},256)$. Then, for each matrix $\mymatrix{A}_i$,
generate the following $T/2$ sets:
$\left\{\widetilde{\mathbb{A}}_{i,k}\right\}_{k=0}^{T/2-1}$, where
$\widetilde{\mathbb{A}}_{i,k}=\{\mymatrix{A}_i(s,2)\bmod 128|s\equiv
k\pmod{T/2}\}$. Next, expand each $\widetilde{\mathbb{A}}_{i,k}$ to
construct \[\widetilde{\mathbb{A}}_{i,k}^*=\left\{x_1\oplus x_2\oplus
x_3\left|\;x_1,x_2,x_3\in\widetilde{\mathbb{A}}_{i,k}\cup\{0,127\}\right.\right\},\]
which is an approximation of the following set:
$
\mathbb{A}_{i,k}^*=\{0,127,a_{0,i,k}^*,a_{1,i,k}^*,a_{0,i,k}^*\oplus
127,a_{1,i,k}^*\oplus 127, a_{0,i,k}^*\oplus
a_{1,i,k}^*,a_{0,i,k}^*\oplus a_{1,i,k}^*\oplus 127\},
$
where $a_{0,i,k}^*=(a_{0,i,0}+k\cdot K_{10})\bmod 128$ and
$a_{1,i,k}^*=(a_{1,i,0}+k\cdot K_{10})\bmod 128$. Note that
$a_{0,i,0}$ and $a_{1,i,0}$ are the two subkeys corresponding to the
color channel in question.

Then, if there exists $k\in\{0,\cdots,T/2-1\}$ such that
$\#\left(\widetilde{\mathbb{A}}_{i,k}^*\right)\not\in\{2,4,8\}$, one
can immediately conclude that the current value of $K_{10}$ is wrong
and then remove it from the list of candidate values for $K_{10}$.

The output of this step includes a list of $N$ candidate values of
$K_{10}$ and at most $3T/2$ sets
$\{\widetilde{\mathbb{A}}_{i,k}\}_{1\leq i\leq 3 \atop 0\leq k\leq
T/2-1}$ for each candidate value of $K_{10}$. The total number of
memory units required is not greater than $6\times 3NT/2=9NT\leq
12\times 256\times 128=294912\approx 2^{18.2}$, which is practical
for a PC to store the intermediate data. Here, note that 0 and 127
are always in $\mathbb{A}^*$, so they do not need to be kept.

\textbf{Step 3} -- Determining $\{K_i\bmod 128\}_{i=4}^{10}$

For each color channel, choosing the set
$\widetilde{\mathbb{A}}_{i,k_0}^*$ of the greatest size\footnote{The
greatest size may be 8, 4 or 2. When it is 4 or 2,
$\widetilde{\mathbb{A}}_{i,k_0}^*$ may not be a good estimation of
$\mathbb{A}_{i,k_0}^*$ and as a result cannot be used to support the
attack. This case often occurs when $K_{10}$ is relatively large,
thus leading to a very small occurrence probability of
XOR-equivalent encryption functions (see
Fig.~\ref{figure:numberequivalentxor}).}, one can exhaustively
search all possible values of $(a_{0,i,k_0}^*,a_{1,i,k_0}^*)$, i.e.,
search all possible values of $a_{0,i,0}^*=(a_{0,i,k_0}^*-k_0\cdot
K_{10})\bmod 128$ and $a_{1,i,0}^*=(a_{1,i,k_0}^*-k_0\cdot
K_{10})\bmod 128$. Note that $a_{0,1,0}^*=K_4\bmod 128$ and
$a_{1,1,0}^*=K_7\bmod 128$ (red channel), $a_{0,2,0}^*=K_5\bmod 128$
and $a_{1,2,0}^*=K_8\bmod 128$ (green channel),
$a_{0,3,0}^*=K_6\bmod 128$ and $a_{1,3,0}^*=K_9\bmod 128$ (blue
channel).

All the guessed values of $(a_{0,i,0}^*,a_{1,i,0}^*)$ are verified
by employing the relationship between $\mathbb{A}_{i,k_0}^*$ and
other sets $\{\mathbb{A}_{i,k}^*\}_{k\neq k_0}$. If all possible
values of $(a_{0,i,0}^*,a_{1,i,0}^*)$ are eliminated, the current
value of $K_{10}$ can also be eliminated. Note that the other three
values of a valid candidate $(a_{0,i,0}^*,a_{1,i,0}^*\oplus
128,K+{10}\bmod 128)=(u,v,w)$ will also pass the verification
process due to Fact~\ref{fact:equivalnetkey} below: $(u\oplus 127,
v\oplus 127, 128-w)$, $(v, u, w)$, and ($v\oplus 127, u\oplus 127,
128-w$).

\begin{fact}
Given $x, a, c\in \{0,\cdots,127\}$, $x+ac\equiv(x\oplus
127+(128-a)c)\oplus 127\pmod{128}$. \label{fact:equivalnetkey}
\end{fact}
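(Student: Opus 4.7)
The plan is to reduce the claimed identity to a single well-known bit-level fact: XOR-ing with $127$ on the lower seven bits coincides with the arithmetic complement modulo $128$. Concretely, the first step is to establish the auxiliary identity
\begin{equation*}
n \oplus 127 \equiv 127 - n \pmod{128}
\end{equation*}
for every integer $n$. This holds because $127 = 2^7-1$ has binary expansion $\underbrace{1\cdots 1}_{7}$, so XOR-ing with $127$ flips the lower $7$ bits and leaves the higher bits untouched; taking the result $\bmod\,128$ discards the higher bits and, on an integer in $\{0,\ldots,127\}$, flipping all bits is the same as subtracting from $127$. The only subtle point is that $n$ in the statement need not be in $\{0,\ldots,127\}$ a priori (the quantity $x \oplus 127 + (128-a)c$ may exceed $127$), but since we only care about the congruence class mod $128$ we may first reduce $n$ mod $128$ and the identity still applies.

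The second step is a direct substitution. I would rewrite the right-hand side as
\begin{equation*}
(x \oplus 127 + (128-a)c)\oplus 127 \equiv 127 - \bigl((127-x) + 128c - ac\bigr) \pmod{128},
\end{equation*}
using the auxiliary identity once for the inner $x\oplus 127$ (valid because $x\in\{0,\ldots,127\}$) and once for the outer $\oplus\,127$. The third step is a routine algebraic simplification: the two $127$'s cancel, the $128c$ term vanishes modulo $128$, and what remains is exactly $x+ac\pmod{128}$, which is the left-hand side.

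I do not expect any real obstacle here; the only thing to watch is the domain on which $y\mapsto y\oplus 127 = 127-y$ is valid as an equality of integers versus a congruence mod $128$. I will phrase the argument throughout as a chain of congruences $\pmod{128}$ so that this distinction never causes trouble, and use the restriction $a,c,x\in\{0,\ldots,127\}$ only to guarantee that $x\oplus 127 = 127-x$ as integers at the inner step. Everything else is a one-line computation.
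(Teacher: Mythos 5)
Your proof is correct. Note that the paper itself states Fact~\ref{fact:equivalnetkey} without any proof (as it does for its other Facts), so there is no paper argument to compare against; your write-up simply supplies the missing justification, and it is the natural one. The key identity $n\oplus 127\equiv 127-n\pmod{128}$ is handled properly, including the one delicate point: the outer operand $x\oplus 127+(128-a)c$ is a non-negative integer that may exceed $127$, but since XOR with $127$ only flips the low seven bits and the residue mod $128$ depends only on those bits, one indeed has $(n\oplus 127)\bmod 128 = 127-(n\bmod 128)$, exactly as you argue. With the inner step $x\oplus 127=127-x$ (valid because $x\in\{0,\ldots,127\}$), the chain $(x\oplus 127+(128-a)c)\oplus 127\equiv 127-\bigl(127-x+128c-ac\bigr)\equiv x+ac\pmod{128}$ closes the proof.
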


The output of this step is a list of candidate values of
\[
K^*=(K_4\bmod 128,\cdots,K_9\bmod 128,K_{10}\bmod 128).
\]
In the worst case, the number of all possible values is $N\times
24^3\leq 256\times 24^3=3538944\approx 2^{21.6}$, which is still
much smaller than the number of all possible values of the subkey
$K^*$: $2^{6\times 7+8}=2^{50}$. In the best case, the number of
candidate values is only $2\times 2^3=16$ (according to
Fact~\ref{fact:equivalnetkey}).

\subsubsection{Experimental Results}
\label{subsubsection:experiments}

To validate the feasibility of the above attack, we have carried out
a real attack with a randomly-generated secret key
$K=``2A84BCF25E6A664E4C41"$. As a result, we got the following
output from Step 2:
\begin{eqnarray*}
K_{10} & \in & \{1,3,\cdots,255\},\\
\mathbb{A}_{0,6}^* & = & \{0, 127,108, 20, 7, 107, 120, 108\},\\
\mathbb{A}_{0,28}^* & = & \{0, 127, 115, 125, 14, 0, 12, 113\},\\
\mathbb{A}_{0,79}^* & = & \{0, 127, 116, 117, 1, 10, 11, 126\},\\
\mathbb{A}_{1,19}^* & = & \{0, 127, 16, 33, 49, 111, 94, 78\},\\
\mathbb{A}_{1,28}^* & = & \{0, 127, 106, 122, 21, 5, 111, 16\},\\
\mathbb{A}_{2,7}^* & = & \{0, 127, 19, 78, 108, 49, 34, 93\},\\
\mathbb{A}_{2,18}^* & = & \{0, 127, 34, 93, 3, 33, 124, 94\}.
\end{eqnarray*}
The final output of the attack (i.e., the output of Step 3) is shown
in Table \ref{table:recoveredkey}.

\renewcommand\tabcolsep{5pt}

\newcolumntype{C}[1]{>{\centering\let\newline\\\arraybackslash\hspace{0pt}}m{#1}}

\begin{table}[htbp]
\center\caption{The final output of a real attack, where the
underlined data form the real values of $\{K_i\bmod
128\}_{i=4}^{10}$.}
\begin{tabular}{C{1.7cm}|*{5}{c|}c}
\hline
\multirow{2}{1in}{$K_{10}\bmod 128$}     & \multicolumn{6}{c}{$\{K_i\bmod 128\}_{i=4}^{9}$}\\
\cline{2-7} & $i=4$ & $i=7$ & $i=5$ & $i=8$ & $i=6$ & $i=9$\\
\hline
\multirow{8}{*}{63}& \multirow{4}{*}{25} & \multirow{4}{*}{13} & \multirow{2}{*}{33} & \multirow{2}{*}{49 } & 51 & 21\\
\cline{6-7} & & & & & 21 & 51\\
\cline{4-7} & & &\multirow{2}{*}{49} & \multirow{2}{*}{33} & 51 & 21\\
\cline{6-7} & & & & & 21 & 51\\
\cline{2-7} & \multirow{4}{*}{13} & \multirow{4}{*}{25} & \multirow{2}{*}{33} & \multirow{2}{*}{49 } & 51 & 21\\
\cline{6-7} & & & & & 21 & 51\\
\cline{4-7} & & & \multirow{2}{*}{49} & \multirow{2}{*}{33} & 51 & 21\\
\cline{6-7} & & & & & 21 & 51\\
\hline \multirow{8}{*}{\underline{65}} & \multirow{4}{*}{102} & \multirow{4}{*}{114} & \multirow{2}{*}{94} & \multirow{2}{*}{78} & 76 & 106\\
\cline{6-7} & & & & & 106& 76\\
\cline{4-7} & & & \multirow{2}{*}{78} & \multirow{2}{*}{94} & 76 & 106\\
\cline{6-7} & & & & & 106& 76\\
\cline{2-7} & \multirow{4}{*}{\underline{114}} & \multirow{4}{*}{\underline{102}} & \multirow{2}{*}{94} & \multirow{2}{*}{78} & 76 & 106\\
\cline{6-7} & & & & & 106& 76\\
\cline{4-7} & & & \multirow{2}{*}{\underline{78}} & \multirow{2}{*}{\underline{94}} & 76 & 106\\
\cline{6-7} & & & & & \underline{106} & \underline{76}\\
\hline
\end{tabular}
\label{table:recoveredkey}
\end{table}

Finally, note that one may also be able to distinguish some
XOR-equivalent encryption functions even with less than 128 chosen
plain-images. To investigate such a possibility, we have carried out
some experiments by choosing the following $(n+1)<128$ plain-images
instead: $\{I_l\}_{l=0}^n$, where $I_l=I_0\oplus l$ for any $l>0$.
Let $N(n)$ be the number of XOR-equivalent encryption functions
detected with the above $n+1$ chosen plain-images. The ratio
$r(n)=N(127)/N(n)$ gives an estimation of the probability that a
detected XOR-equivalent encryption function is real. For three
randomly-generated keys, the values of $r(n)$ with respect to
different values of $n$ are shown in
Fig.~\ref{figure:n_chosen_plaintexts}, from which one can see that
the value of $r(n)$ always increases significantly when $n$
increases from $2^i-1$ to $2^i$ ($i=1\sim 6$). We also carried out
experiments for other random keys, and found out that this fact
holds for most of them. According to this experimental result, we
can choose the following 13 plain-images to minimize the number of
chosen plaintexts: $I_0$, $I_1=I_0\oplus 1$, $I_2=I_0\oplus 2$,
$I_3=I_0\oplus 3$, $I_4=I_0\oplus 4$, $I_5=I_0\oplus 7$,
$I_6=I_0\oplus 8$, $I_7=I_0\oplus 15$, $I_8=I_0\oplus 16$,
$I_9=I_0\oplus 31$, $I_{10}=I_0\oplus 32$, $I_{11}=I_0\oplus 63$ and
$I_{12}=I_0\oplus 64$. Then, for 1,000 randomly-generated secret
keys, our experiments show that the average value of
$r^*=N(127)/N^*$ is about 0.825, where $N^*$ denotes the number of
detected XOR-equivalent encryption functions with the 13 chosen
plain-images. Note that the value of $r^*$ is not accurate when
$N^*$ is too small. If only those keys corresponding to $N^*\geq
100$ are considered, the average value of $r^*$ increases to about
0.9234. If only those corresponding to $N(n)\geq 1000$ are counted,
the average value of $r^*$ becomes about 0.9826. In practice, one
may have to use more than 13 chosen plain-images to run the proposed
attack, but it is expected that $O(20)$ chosen plain-images are
enough in most cases.

\begin{figure}[htbp]
\centering
\includegraphics[width=1.5\figwidth]{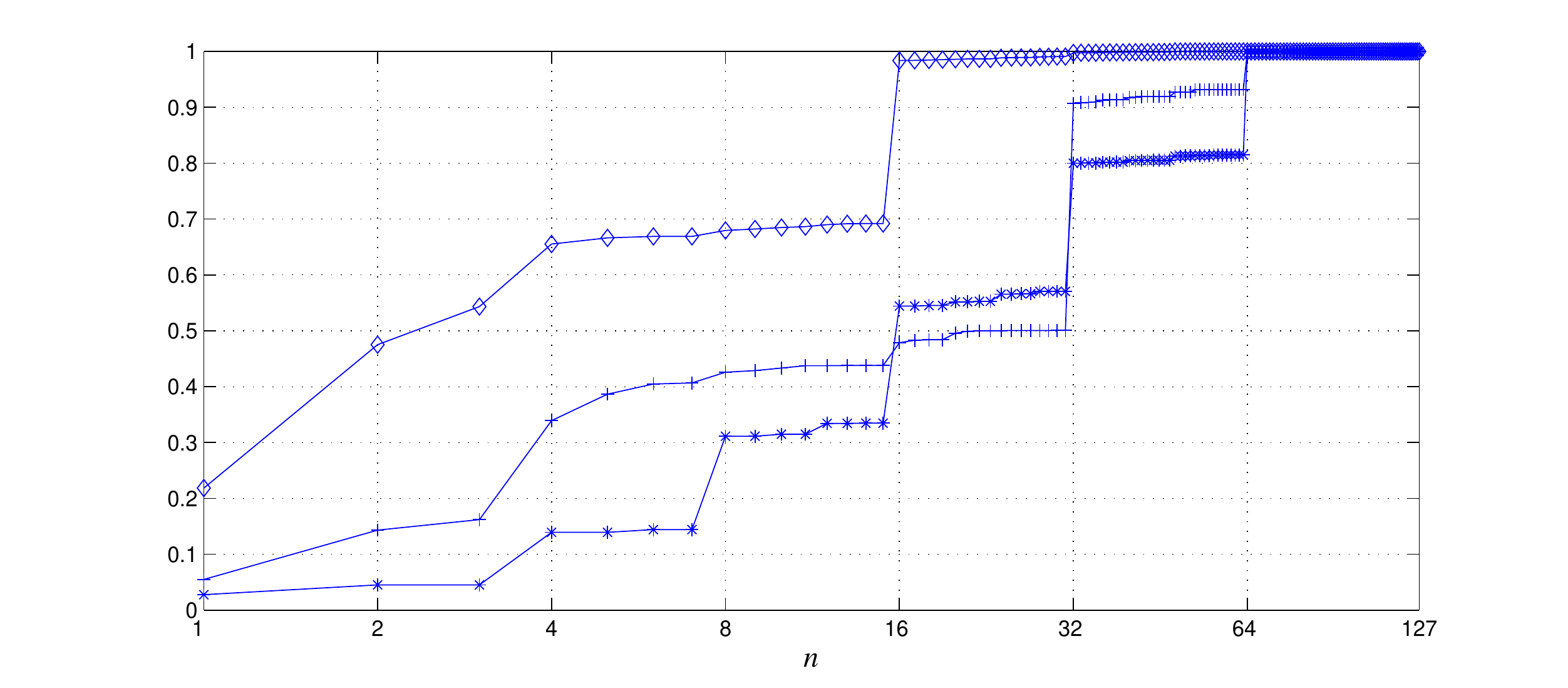}
\caption{The values of $r(n)$ with respect to different values of
$n=1\sim 127$, where the three lines correspond to the results of
three randomly-generated keys.}\label{figure:n_chosen_plaintexts}
\end{figure}

\subsection{Known-plaintext attack based on a masking image}

According to the results shown in
Fig.~\ref{figure:numberequivalentxor}, we know that many encryption
functions are equivalent to XOR operations. Therefore, if we
consider all the encryption functions as XOR-equivalent ones, then a
masking image can be obtained by simply XORing a known plain-image
and the corresponding cipher-image pixel by pixel. By using this
masking image as an equivalent of the secret key to decrypt other
cipher-images, all the pixels encrypted by real XOR-equivalent
encryption functions will be correctly recovered. If the number of
such correctly-recovered pixels is sufficiently large, some visual
information about the plain-images may be obtained. It is expected
that this known-plaintext attack can work well when $K_{10}$ is
relatively small. Figure~\ref{figure:knownplaintextattack} shows two
examples of this attack when $K_{10}=6$ and 30, from which one can
see that some important visual information about the plain-image is
revealed.

\begin{figure}[htbp]
\centering
\begin{minipage}[t]{\imgwidth}
\centering
\includegraphics[width=\textwidth]{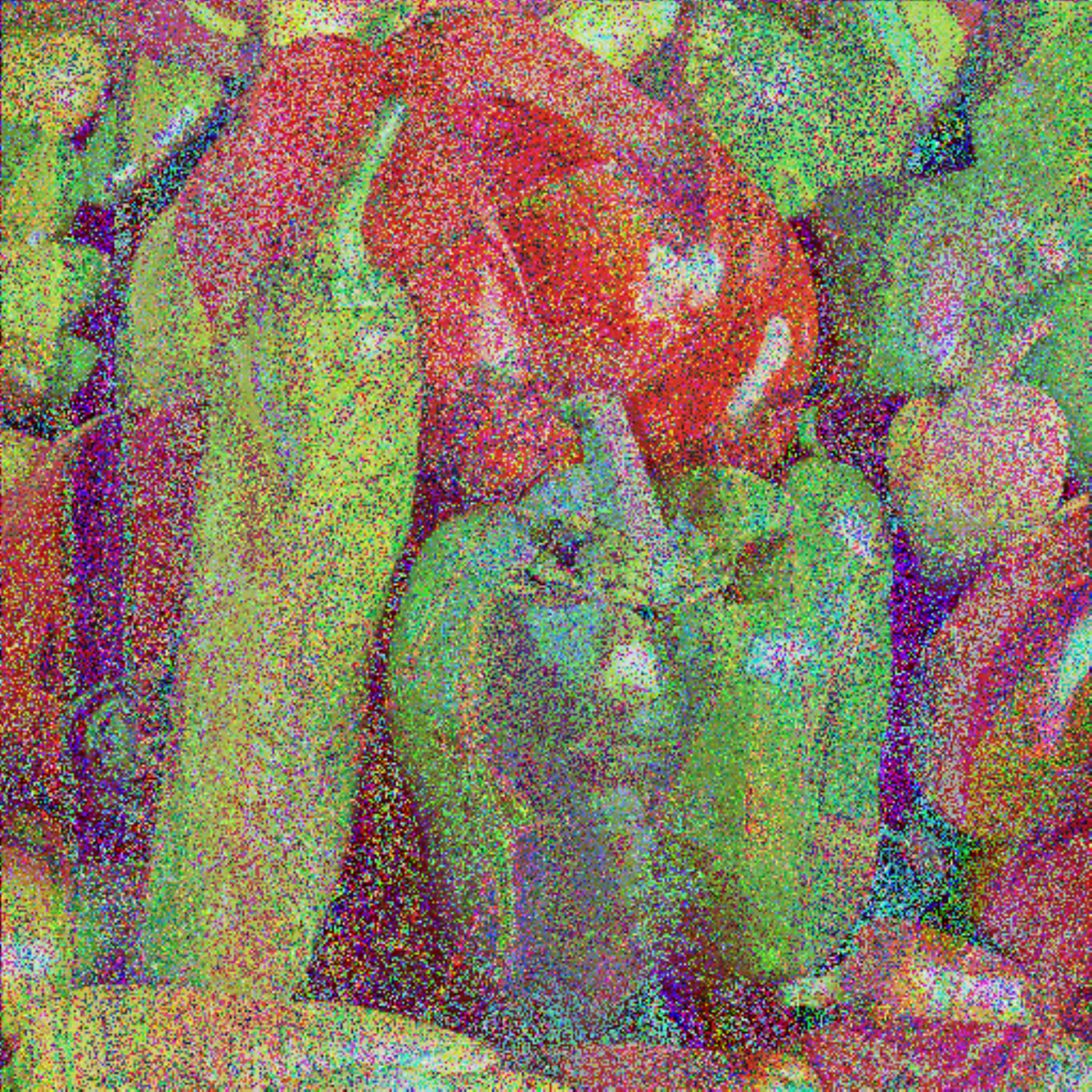}
a)
\end{minipage}
\begin{minipage}[t]{\imgwidth}
\centering
\includegraphics[width=\textwidth]{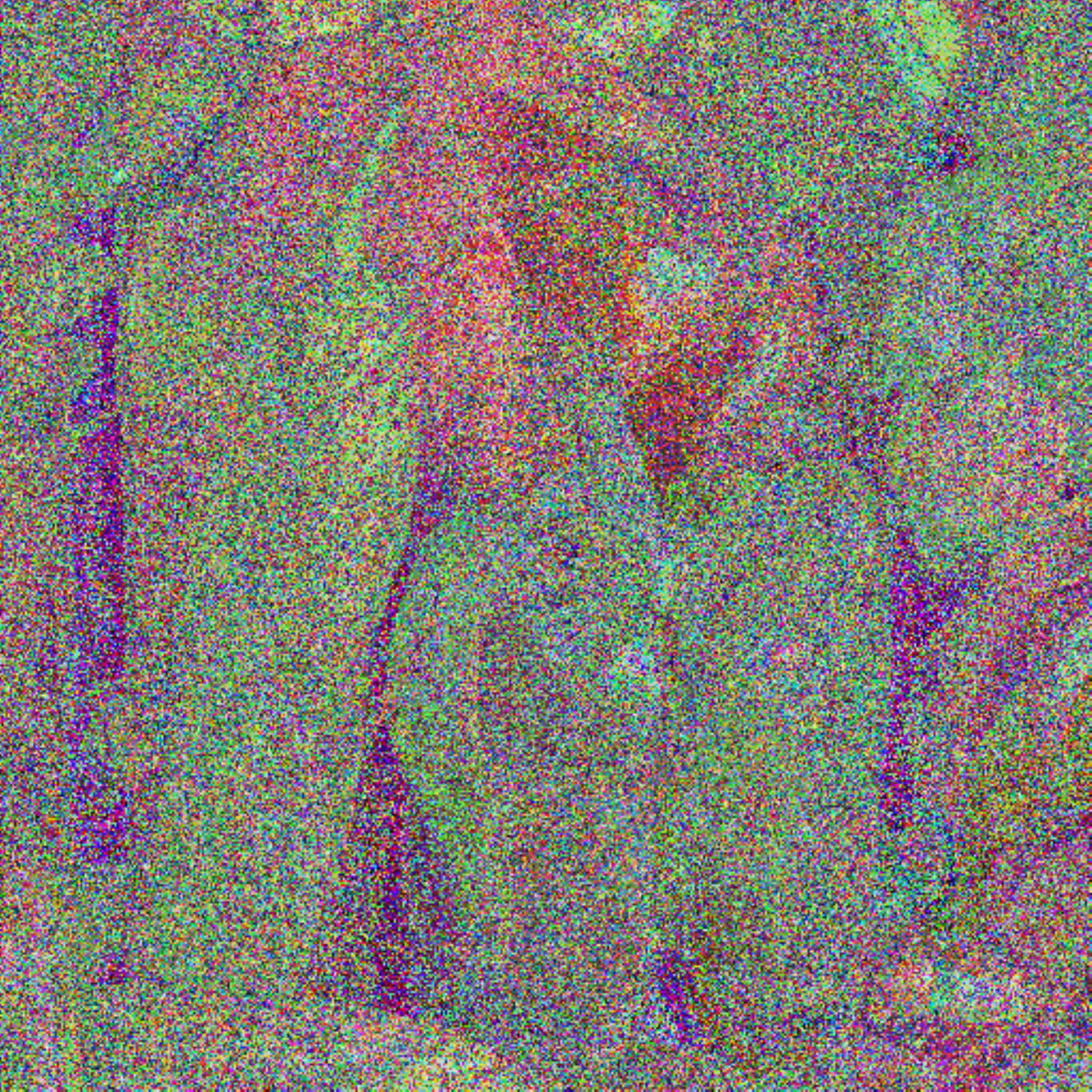}
b)
\end{minipage}
\caption{The result of breaking a plain-image ``Peppers" with a
masking image obtained when ``Lenna"
(Fig.~\ref{figure:EquivalentKey2}a) is the known plain-image: a)
$K=``8DB87A1613D75ADF2D06"$; b) $K=``8DB87A1613D75ADF2D1E"$.}
\label{figure:knownplaintextattack}
\end{figure}

\section{Conclusion}

In this paper, the security of a recently-proposed image encryption
scheme has been analyzed in detail. It is found that there exists a
number of invalid keys, weak keys and partially equivalent keys,
which reduce the size of the key space. Some attacks to a number of
subkeys have also been developed: 1) given a chosen plain-image, a
subkey $K_{10}$ can be guessed with a complexity less than $2^8$; 2)
part of the key may be recovered with a chosen-plaintext attack
using at most 128 chosen plain-images. The scheme under study can
also be broken with only one known plain-image, when the subkey
$K_{10}$ is small. In addition, some other insecure problems about
the scheme have been discussed throughout. The cryptanalysis
presented in this paper shed some new light on attacking other
encryption schemes that are composed of multi-round encryption
functions, a relatively difficult but important topic to be further
investigated in the near future.

\section*{Acknowledgement}

Chengqing Li was partially supported by the Research Grants Council
of the Hong Kong SAR Government under Project 523206 (PolyU
5232/06E). Shujun Li was supported by the Alexander von Humboldt
Foundation, Germany. Juana Nunez and Gonzalo Alvarez were partially
supported by Minis\-terio de Educaci\'on y Ciencia of Spain,
Research Grant SEG2004-02418.

\bibliographystyle{elsart-num}
\bibliography{IVC,lcq_paper_bib}

\begin{thebibliography}{10}
\expandafter\ifx\csname url\endcsname\relax
  \def\url#1{\texttt{#1}}\fi
\expandafter\ifx\csname urlprefix\endcsname\relax\def\urlprefix{URL }\fi

\bibitem{Li:PVEA:IEEETCASVT2007}
S.~Li, G.~Chen, A.~Cheung, B.~Bhargava, K.-T. Lo, On the design of perceptual
  {MPEG}-video encryption algorithms, IEEE Trans. Circuits and Systems for
  Video Technology 17~(2) (2007) 214--223.

\bibitem{Greek:SCANImageEncryption:JEI95}
C.~Alexopoulos, N.~G. Bourbakis, N.~Ioannou, Image encryption method using a
  class of fractals, J. Electronic Imaging 4~(3) (1995) 251--259.

\bibitem{Chuang:BaseSwitchingImageEncryption:JEI97}
T.-J. Chuang, J.-C. Lin, New approach to image encryption, J. Electronic
  Imaging 7~(2) (1998) 350--356.

\bibitem{Guo-Yen-Pai:HDSP:IEEPVISP2002}
J.-I. Guo, J.-C. Yen, H.-F. Pai, New voice over {Internet} protocol technique
  with hierarchical data security protection, IEE Proc. -- Vis. Image Signal
  Process. 149~(4) (2002) 237--243.

\bibitem{Chen&Yen:RCES:JSA2003}
H.-C. Chen, J.-C. Yen, A new cryptography system and its {VLSI} realization, J.
  Systems Architecture 49 (2003) 355--367.

\bibitem{Chung&Chang:SCANImageEncryption:PRL98}
K.-L. Chung, L.-C. Chang, Large encryption binary images with higher security,
  Pattern Recognition Letters 19~(5--6) (1998) 461--468.

\bibitem{YaobinMao:3DBakerImageEncryption:IJBC2004}
Y.~Mao, G.~Chen, S.~Lian, A novel fast image encryption scheme based on {3D}
  chaotic {Baker} maps, Int. J. Bifurcation and Chaos 14~(10) (2004)
  3613--3624.

\bibitem{YaobinMao:CSF2004}
G.~Chen, Y.~Mao, C.~K. Chui, A symmetric image encryption scheme based on {3D}
  chaotic cat maps, Chaos, Solitons \& Fractals 21~(3) (2004) 749--761.

\bibitem{Li:AttackingPOMC2004}
S.~Li, C.~Li, G.~Chen, N.~G. Bourbakis, K.-T. Lo, A general quantitative
  cryptanalysis of permutation-only multimedia ciphers against plaintext
  attacks, Signal Processing: Image Communication 23~(3) (2008) 212--223.

\bibitem{Li:AttackingRCES2004}
S.~Li, C.~Li, G.~Chen, X.~Mou, Cryptanalysis of the {RCES/RSES} image
  encryption scheme, Journal of Systems and Software 81~(7) (2008) aa--bb.

\bibitem{Li:AttackingCNN2004}
C.~Li, S.~Li, D.~Zhang, G.~Chen, Cryptanalysis of a chaotic neural network
  based multimedia encryption scheme, in: Advances in Multimedia Information
  Processing - PCM 2004 Proceedings, Part III, Vol. 3333 of Lecture Notes in
  Computer Science, Springer-Verlag, 2004, pp. 418--425.

\bibitem{Li:BreakingBS:2006}
S.~Li, C.~Li, K.-T. Lo, G.~Chen, Cryptanalysis of an image encryption scheme,
  J. Electronic Imaging 15~(4) (2006) art. no. 043012.

\bibitem{Li:AnalyzingHDSP:IEEPVISP2006}
C.~Li, S.~Li, D.~Zhang, G.~Chen, Cryptanalysis of a data security protection
  scheme for {VoIP}, IEE Proc. -- Vis. Image Signal Process. 153~(1) (2006)
  1--10.

\bibitem{Li:AttackingISWBE2006}
S.~Li, C.~Li, K.-T. Lo, G.~Chen, Cryptanalysis of an image scrambling scheme
  without bandwidth expansion, IEEE Transactions on Circuits and Systems for
  Video Technology 18~(3) (2008) 338--349.

\bibitem{Li:AttackingBSSE2006}
S.~Li, C.~Li, K.-T. Lo, G.~Chen, Cryptanalyzing an encryption scheme based on
  blind source separation, IEEE Transactions on Circuits and System¡ªI 55~(4)
  (2008) 1055--1063.

\bibitem{Li:ChaosImageVideoEncryption:Handbook2004}
S.~Li, G.~Chen, X.~Zheng, Chaos-based encryption for digital images and videos,
  in: B.~Furht, D.~Kirovski (Eds.), Multimedia Security Handbook, CRC Press,
  2004, Ch.~4, pp. 133--167, preprint is available at
  \url{http://www.hooklee.com/pub.html}.

\bibitem{Furht:ImageVideoEncryption:Handbook2004}
B.~Furht, D.~Socek, A.~M. Eskicioglu, Fundamentals of multimedia encryption
  techniques, in: B.~Furht, D.~Kirovski (Eds.), Multimedia Security Handbook,
  CRC Press, 2004, Ch.~3, pp. 93--132.

\bibitem{Uhl:ImageVideoEncryption:Book2005}
A.~Uhl, A.~Pommer, Image and Video Encryption: From Digital Rights Management
  to Secured Personal Communication, Springer Science + Business Media Inc.,
  Boston, 2005.

\bibitem{Furht:MultimediaSecurity:Book2005}
B.~Furht, E.~Muharemagic, D.~Socek (Eds.), Multimedia Encryption and
  Watermarking, Springer, New York, 2005.

\bibitem{Zeng:MultimediaSecurity:Book2006}
W.~Zeng, H.~Yu, C.-Y. Lin (Eds.), Multimedia Security Technologies for Digital
  Rights Management, Academic Press, Inc., Orlando, Florida, 2006.

\bibitem{Pareek:PLA2003}
N.~Pareek, V.~Patidar, K.~Sud, Discrete chaotic cryptography using external
  key, Physics Letters A 309~(1--2) (2003) 75--82.

\bibitem{Pareek:CNSNS2005}
N.~Pareek, V.~Patidar, K.~Sud, Cryptography using multiple one-dimensional
  chaotic maps, Communications in Nonlinear Science and Numerical Simulation
  10~(7) (2005) 715--723.

\bibitem{Pareek:ImageEncrypt:IVC2006}
N.~Pareek, V.~Patidar, K.~Sud, Image encryption using chaotic logistic map,
  Image and Vision Computing 24~(9) (2006) 926--934.

\bibitem{Alvarez:PLA2003}
G.~Alvarez, F.~Montoya, M.~Romera, G.~Pastor, Cryptanalysis of a discrete
  chaotic cryptosystem using external key, Physics Letters A 319~(3--4) (2003)
  334--339.

\bibitem{Li:Pareek:CSF2008}
C.~Li, S.~Li, G.~\'{A}lvarez, G.~Chen, K.-T. Lo, Cryptanalysis of a chaotic
  block cipher with external key and its improved version, Chaos,Solitons {\&}
  Fractals 37~(1) (2008) 299--307.

\bibitem{AlvarezLi:Rules:IJBC2006}
G.~Alvarez, S.~Li, Some basic cryptographic requirements for chaos-based
  cryptosystems, Int. J. Bifurcation and Chaos 16~(8) (2006) 2129--2151.

\end{thebibliography}

\end{document}